
\documentclass{article}
\usepackage{slashbox}

\usepackage{amsfonts,amsmath,amssymb,amsthm}
\usepackage{verbatim,float,url,enumerate}
\usepackage{graphicx,subfigure,epsfig,psfrag}
\usepackage{bm,dsfont,color,appendix}
\usepackage{adjustbox}
\usepackage{mathtools}
\usepackage{bbm}
\usepackage{natbib}
\usepackage{latexsym,graphicx}
\usepackage{setspace}
\usepackage{tikz}
\usetikzlibrary{arrows,positioning}
\usepackage[margin=1 in]{geometry}
\usepackage{rotating}
\usepackage[linesnumbered,ruled]{algorithm2e}
\floatstyle{ruled}
\newfloat{algorithm}{tbhp}{loa}
\floatname{algorithm}{Algorithm}
\usepackage{hyperref}
\usepackage{makecell}
\usepackage{tabularx}
\newtheorem{theorem}{Theorem}
\newtheorem{lemma}{Lemma}
\newtheorem{remark}{Remark}
\newtheorem{corollary}{Corollary} 
\newtheorem{proposition}{Proposition}

\newtheorem{assumptions}{Assumption}

\def\real{{\mathbb R}}

\def\Err{\rm{Err}}

\def\rP{\rm P}
\def\bP{\bold P}
\def\bSigma{\bold \Sigma}

\date{}
\author{Leying Guan\thanks{Dept. of  Statistics,
    Stanford Univ, leying.guan@gmail.com}, Rob Tibshirani \thanks{Depts. of Biomedical Data Sciences, and Statistics,
    Stanford Univ, tibs@stanford.edu}}
\title{Prediction and outlier detection in  classification problems}
\begin{document}
\maketitle
\begin{abstract}
We consider the multi-class classification problem when the training data and the out-of-sample  test data may have different distributions and propose a method  called BCOPS (balanced and conformal optimized prediction sets). BCOPS constructs a prediction set $C(x)$ as a subset of class labels, possibly empty. It tries to optimize the out-of-sample performance, aiming to include the correct class as often as possible, but also detecting outliers  $x$, for which the method returns no prediction (corresponding to $C(x)$ equal to the empty set). The proposed method combines supervised-learning algorithms with the method of conformal prediction to minimize a misclassification loss averaged over the  out-of-sample distribution. The constructed prediction sets  have a finite-sample coverage guarantee without distributional assumptions.  

We also propose a method to estimate the outlier detection rate of a given method. We prove asymptotic consistency and optimality of our proposals under suitable assumptions and illustrate our methods on real data examples. 
\end{abstract}

\section{Introduction}
\label{sec:introduction}
We consider the multi-class classification problem where the training data and the test data may be mismatched. That is, the training data and the test data may have different distributions. We assume the access to the labeled training data and unlabeled test data. Let $\{(x_i, y_i), i = 1,\ldots, n\}$ be the training data set, with continuous features $x_i\in \real^p$ and response $y_i\in \{1,\ldots, K\}$ for $K$ classes.  

In classification problems, one usually aims to produce a good classifier using the training data  that predicts the class $k$ at each $x$. Instead, here we construct  a {\em prediction set }$C(x)$ at each $x$ by solving an optimization problem minimizing the out-of-sample loss directly. The prediction set $C(x)$ might contain multiple labels or be empty. When $K = 2$, for example, $C(x)\in \{\{1\}, \{2\}, \{1,2\}, \emptyset\}$. If $C(x)$ contains multiple labels, it would indicate that $x$ could be any of the listed classes. If $C(x) = \emptyset$, it would indicate that $x$ is likely to be far from the training data and we could not assign it to any class and consider it as an outlier.

There are many powerful supervised learning algorithms that try to estimate $P(y = k|x)$, the conditional probability of $y$ given $x$,  for $k = 1,\ldots, K$. When the test data and the training data have the same distribution, we can often have reasonably good performance and a relatively faithful evaluation of the out-of-sample performance using sample slitting of the training data. However,  the posterior probability $P(y=k|x)$ may not reveal the fact that the training and test data are mismatched. In particular, when erroneously applied to mismatched data, the standard approaches may yield predictions for $x$ far from the training samples, where it is usually better to not make a prediction at all. Figure \ref{fig:illustration1} shows a two dimensional illustrative example. In this example, we have a training data set with two classes and train a logistic regression model with it. The average misclassification loss based on sample splitting of the  training data is extremely low.  The test data comes from a very different distribution. We plot the training data in the upper left plot: the black points represent class 1 and blue points represent class 2, and plot the test data in the upper right plot using red points. The black and blue dashed curves in these two plots are the boundaries for $P(y = 1|x) = 0.05$ and $P(y = 1|x) = 0.95$ from the logistic regression model. Based on the predictions from the logistic model, we are quite confident that the majority of the red points are from class 1. However, in this case, since the test samples are relatively far from the training data, most likely, we consider them to be outliers and don't want to make predictions.

\begin{figure}
\caption{\em Illustrative example I. We show the training data in the upper left plot,  the black points represent class 1 and blue points represent class 2, the black and blue dashed curves are the boundaries for $P(y = 1|x) = 0.05$ and $P(y = 1|x) = 0.95$ based on the logistic model. In the upper right plot, we use the red points to represent the test samples. The black and blue dashed lines in the upper half  of the figure are the decision boundaries for posterior probability of  class 2 being 0.05 and 0.95 based on the logistic regression model. The lower half of Figure \ref{fig:illustration1}, the interior of the dashed curves represent the  density-level sets  achieving 95\% coverage for class 1 and class 2 respectively.}
\label{fig:illustration1}
\begin{center}
\includegraphics[width = .8\textwidth, height = .8\textwidth]{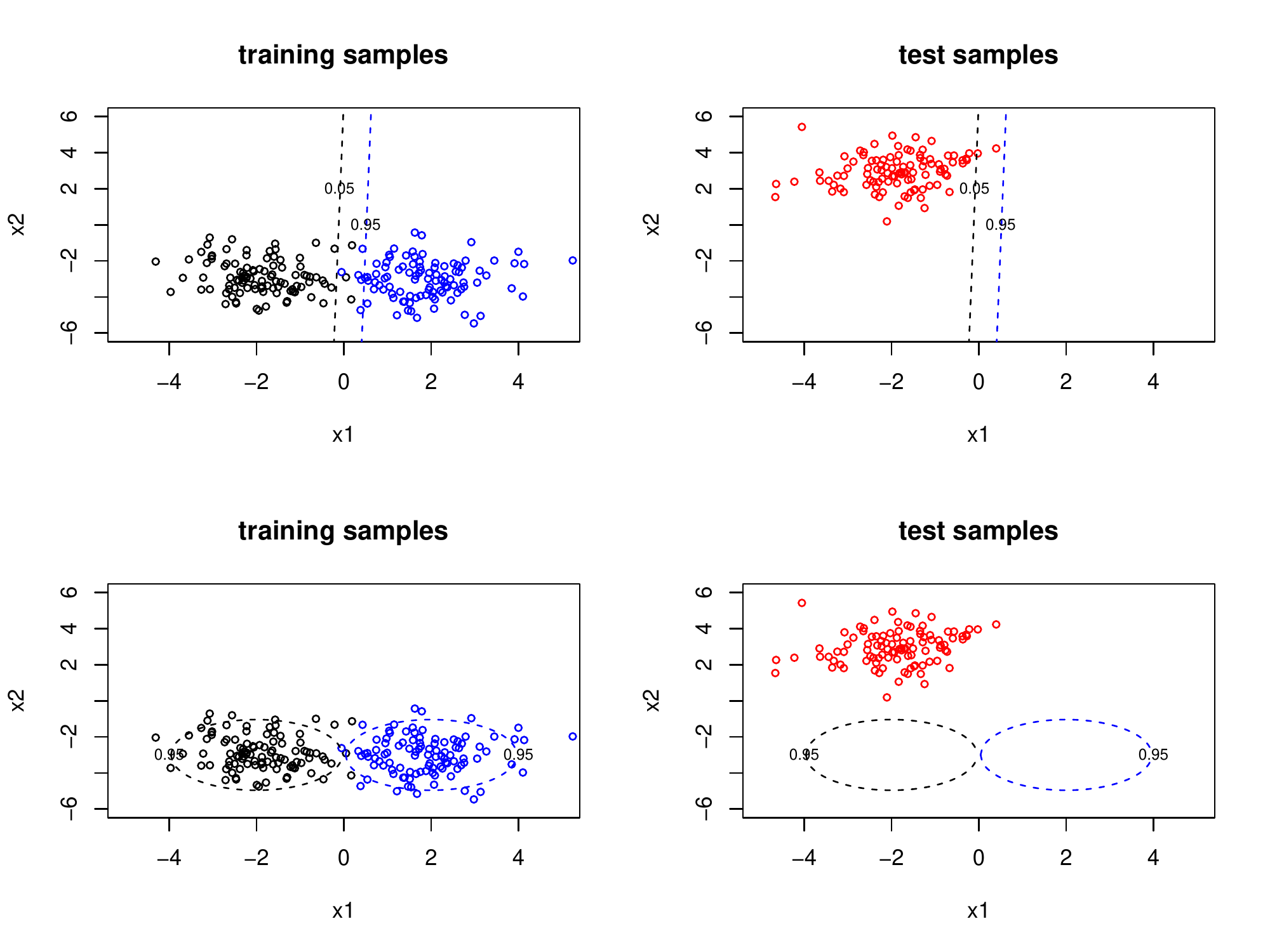} 
\end{center}
\end{figure}

As an alternative, the density-level set \citep{ lei2013distribution, hartigan1975clustering, cadre2006kernel, rigollet2009optimal} considers $f_y(x)$, the density of $x$ given $y$. For each new sample $x$, it constructs a prediction set $C(x)=\{k: x\in A_k\}$ where $A_k = \{x| f_k(x) \geq f_{k,\alpha}\}$ and $f_{k,\alpha}$ is the lower $\alpha$ percentile of $f_k(x)$ under the distribution of class $k$. In Figure \ref{fig:illustration1}, the lower half shows the result of the density-level set with $\alpha = 0.05$. Again, the lower left plot contains the training samples and the lower right plot contains the test samples. The black and blue dashed ellipses are the boundaries for the decision regions $A_1$ and $A_2$ from the the oracle density-level sets (with given densities). We call the prediction with $C(x) = \emptyset$ as the abstention. In this example, we can see that the oracle density-level sets have successfully abstained from predictions while assigning correct labels for most training samples.  The density-level set is also suggested as a way to making prediction with abstention in \cite{hechtlinger2018cautious}. 

However, the density-level set has its own drawbacks. It does not try to utilize information comparing different classes, which can potentially lead to a  large deterioration in performance. Figure \ref{fig:illustration2} shows another example where the oracle density-level set has less than ideal performance. In this example, we have two classes with $x\in \real^{10}$, and the two classes are well separated in the first  dimension and follows the standard normal distribution in other dimensions. In Figure \ref{fig:illustration2}, we show only the first two dimensions. In the the left plot of Figure \ref{fig:illustration2}, we have colored the samples based on their actual class. The black points represent class 1 and the blue points represent class 2. In the right plot of Figure \ref{fig:illustration2}, we have colored the data based on their oracle density-level set results: $x$ is colored green if $C(x) = \{1,2\}$, black if $C(x) = \{1\}$, blue if $C(x) =\{2\}$ and red if $C(x) = \emptyset$. Even though class 1 and 2 can be well separated in the first dimension, we still have $C(x)=\{1,2\}$ for a large portion of data, especially for samples from class 2.

\begin{figure}
\caption{\em Illustrative example II.  We have two classes with $x\in \real^{10}$, and the two classes are well separated in the first  dimension and follows standard normal distribution in other dimensions. The left plot of Figure \ref{fig:illustration2} shows the data colored with their actual class. The black points represent class 1 and the blue points represent class 2. The right plot of Figure \ref{fig:illustration2} shows the data with color corresponding to their density-level set: $x$ is colored green if $C(x) = \{1,2\}$, black if $C(x) = \{1\}$, blue if $C(x) =\{2\}$ and red if $C(x) = \emptyset$.}
\label{fig:illustration2}
\begin{center}
\includegraphics[width = .8\textwidth, height = .4\textwidth]{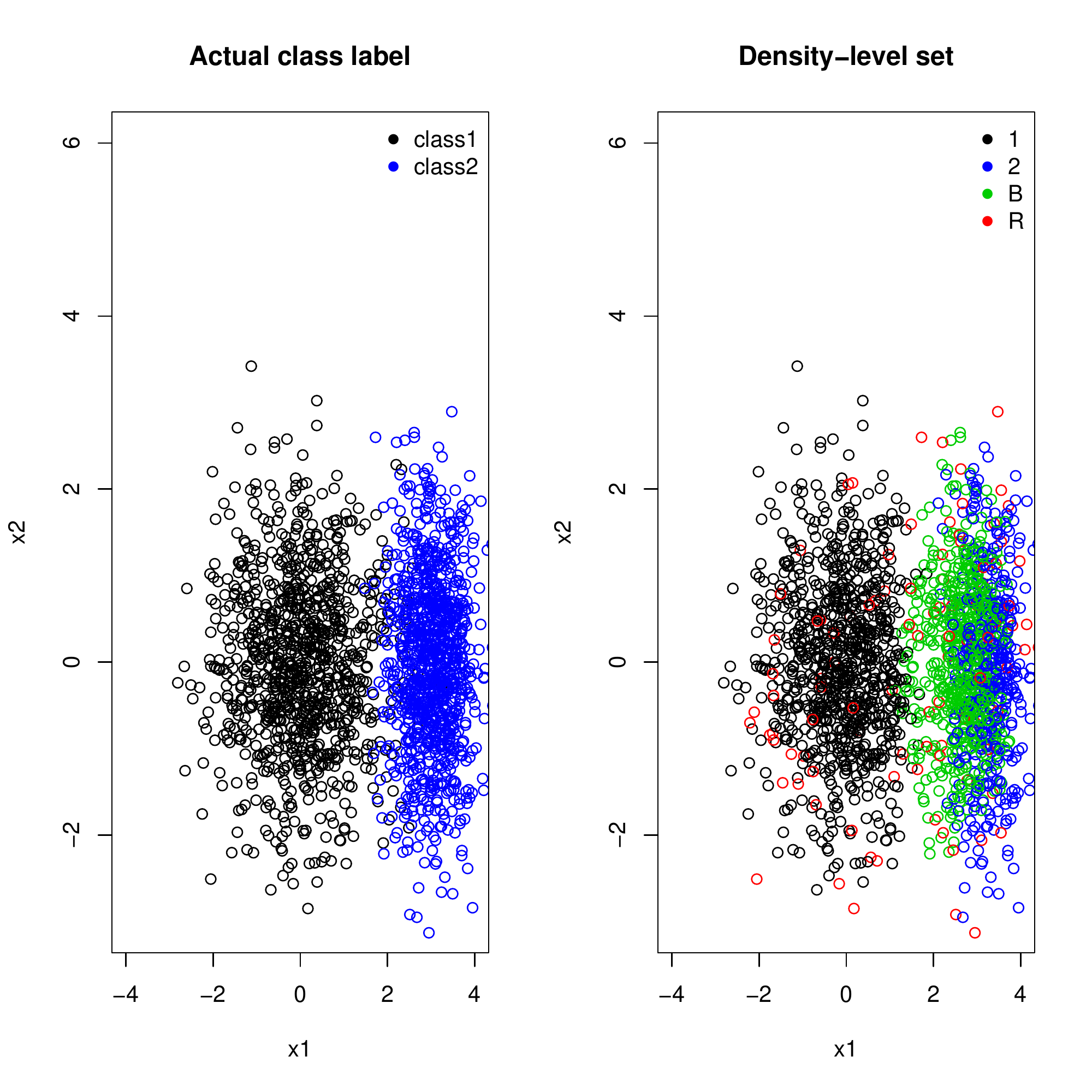} 
\end{center}
\end{figure}

In this paper, following the previous approach of density-level sets, we propose a method  called BCOPS (balanced and conformal optimized prediction set) to construct prediction set $C(x)\subseteq\{1,2,\ldots,K\}$ for each $x$, which tries to make good predictions for samples that are similar to the training data and refrain from making predictions otherwise. BCOPS is usually more powerful than the density-level set because it combines information from different classes when constructing $C(x)$. We also describe a new regression-based method for the evaluation of outlier detection ability under some assumptions on how the test data may differ from the training data. The paper is organized as follows.  In section \ref{sec:methods}, we first describe our model and related works, then we will introduce BCOPS. In section \ref{sec:methods2}, we will describe methods  to evaluate the performance regarding outlier detection. Some asymptotic behaviors of our proposals are given in section \ref{sec:theory}. Finally, we provide real data examples in section \ref{sec:realData}.

\section{BCOPS: Models and methods}
\label{sec:methods}
\subsection{A mixture model}
\label{sec:mehotds_model}
It is often assumed that the distribution of the training data and out-of-sample data are the same.  Let $\pi_k\in (0,1)$ be the proportion samples from class $k\in \{1,\ldots,K\}$ in the training data, with $\sum^K_{k=1} \pi_k = 1$. Let $f_k(x)$ be the density of $x$ from class $k$, and $f(x)$/$f_{test}(x)$ be the marginal in/out-of-sample densities. Under this assumption, we know that
\begin{align}
\label{eq:goal0}
f(x) = \sum^K_{k=1} \pi_k f_{k}(x),\;\;f_{test}(x) = \sum^K_{k=1} \pi_k f_{k}(x).
\end{align}
In particular, $f_{test}(x)$ can be written as a mixture of $f_{k}(x)$, and the mixture proportions $\pi_k$ remain unchanged.  

In this paper, we allow for distributional changes and assume that the out-of-sample data may have different mixture proportions $\tilde \pi_k\in [0,1)$ for $k\in \{1,\ldots, K\}$, and  as well as a new class $\bold{R}$ (outlier class) which is unobserved in the training data. We let

\begin{align}
\label{eq:goal1}
f_{test}(x) = \sum^K_{k=1} \tilde{\pi}_k f_{k}(x)+\epsilon \cdot e(x)
\end{align}
where $e(x)$ is the density of $x$ from the outlier class , $\epsilon\in [0,1)$ is its proportion and $\sum^K_{k=1} \tilde{\pi}_k+\epsilon = 1$.

Under this new model assumption, we want to find a prediction set $C(x)$ that aims to minimize the length of $C(x)$ averaged over a properly chosen importance measure $\mu(x)$, and with guaranteed coverage $(1-\alpha)$ for each class. Let $|C(x)|$ be the size of $C(x)$. We consider the optimization problem $\mathcal{P}$ below:

\begin{align}
\label{eq:conformal1}
&\min \int |C(x)| \mu(x)d x\nonumber \\
s.t.\;\;& P_{k}(k\in C(x)) \geq 1-\alpha, \;\forall k = 1,\ldots, K
\end{align} 
where  $P_{k}(\mathcal{A})$ is the probability of the event $\mathcal{A}$ under the distribution of class $k$ and $\mu(x)$ is a weighting function which we will choose later to tradeoff classification accuracy and outlier detection.  The constraint (3) says that we want the have $k\in C(x)$ for at least $(1-\alpha)$ of samples that are actually from an observed class $k$ (coverage requirement).  If $C(x) =\emptyset$, $x$ is considered to be an outlier at the given level $\alpha$ and we will refrain from making a prediction. 

It is easy to check that problem $\mathcal{P}$ can be decomposed into $K$ independent problems for different classes, referred to as problem $\mathcal{P}_k$:
\begin{align}
\label{eq:conformal2}
&\min \int \mathbbm{1}_{x\in A_k} \mu(x)d x\\
s.t.\;\;& P_{k}(x\in A_k) \geq 1-\alpha
\end{align}
Let $\mathcal{A}_k$ be the solution to problem $\mathcal{P}_k$, then the solution to problem $\mathcal{P}$ is $C(x) = \{k: x\in A_k\}$. The set $\mathcal{A}_k$ has an explicit form using the density functions.

For an event $\mathcal{A}$, let $P_F(\mathcal{A})$ be the probability of $\mathcal{A}$ under distribution $F$ and let $Q(\alpha; g, F)$ be the lower $\alpha$ percentile of a real-valued function $g(x)$ under distribution $F$:
\[
Q(\alpha; g, F) = \sup \{t: P_F(g(x) \leq t) \leq \alpha\}
\]
We use $Q(\alpha; g(x_1), \ldots, g(x_n))$, or  $Q(\alpha; g(x_{1:n}))$ to denote the lower $\alpha$ percentile of $g(x)$ from the empirical distribution using  samples $x_1,\ldots, x_n$. Let $F_k$ be the distribution of $x$ from class $k$. It is easy to check that 
\begin{equation}
\label{eq:BCOPSsolution}
A_k = \{x: v_k(x) \geq Q(\alpha; v_k, F_k)\}, \;\; v_{k}(x) = \frac{f_k(x)}{\mu(x)}
\end{equation}
is the solution to problem $\mathcal{P}_k$. We call this $A_k$ the oracle set for class $k$, the oracle prediction set $C(x)$ for problem $\mathcal{P}$ is constructed using the oracle sets $A_k$. Since $A_k(x)$ depends only on the ordering of $v_k(x)$, we can also use any order-preserving transformation of $v_k(x)$ when constructing $A_k$. (An order-preserving transformation $o:\real\rightarrow \real$ satisfies that $v_1 < v_2 \Leftrightarrow o(v_1) < o(v_2)$ for $\forall v_1, v_2\in \real$.)
\subsection{Strategic choice for the weighting function $\mu(x)$}
\label{sec:methods_weight}
How should we choose $\mu(x)$? For any given $\mu(x)$, while the coverage requirement is satisfied by definition,  the choice of $\mu(x)$ influences how well we separate different observed classes from each other, and inliers from outliers. In practice, except for the coverage, people also want to minimize the  misclassification loss averaged over the out-of-sample data, e.g, 
\begin{equation}
\label{eq:Err}
{\rm Err} = E_{x, y\sim f_{test}(x)}\sum_{k\neq y}\mathbbm{1}_{ k \in C(x)}
\end{equation}
It is easily shown that the solution minimizing the above misclassification loss under the coverage requirement is the same as minimizing the objective $\int |C(x)|\mu(x) dx$ in problem $\mathcal{P}$ with $\mu(x) = f_{test}(x)$. This makes $f_{test}(x)$ a natural choice for $\mu(x)$.

BCOPS constructs $\widehat C(x)$ to approximate the oracle solution $C(x)$ to $\mathcal{P}$ with $\mu(x) = f_{test}(x)$.  Some previous work is closely related or equivalent to other choices of $\mu(x)$. For example, the density-level set described in the introduction can also be written equivalently to the solution of problem $\mathcal{P}$ with $\mu(x)\propto 1$ \citep{lei2013distribution}. 

Our prediction set $\hat C(x)$ is constructed by combining a properly chosen learning algorithms with the conformal prediction idea  to meet the coverage requirement without distributional assumptions \citep{vovk2005algorithmic}. For the remainder of this section, we first have a brief discussion of some related methods in section \ref{sec:methods_related}, and  review the idea of conformal prediction in section \ref{sec:methods_conformal}. We give details of  BCOPS in section \ref{sec:methods_offline},  and we show a simulated example in \ref{sec:methods_example}.  
\subsection{Related work}
\label{sec:methods_related}
The new model assumption described in equation (\ref{eq:goal1})
\begin{align}
f_{test}(x) = \sum^K_{k=1} \tilde{\pi}_k f_{k}(x)+\epsilon\cdot e(x)\nonumber
\end{align}
allows for changes in the mixture proportion, and treats as outliers the part of distributional change that can not be explained. This assumption is different from the assumption that fixes $f(y|x)$ and allows changes in $f(x)$ without constraint. We use this model because $P(y=k)$ is much easier to estimate than $f(x)$, and it explicitly describes what kind of data we would like to reject.

Without the extra term $\epsilon\cdot e(x)$, the change in mixture proportions $\pi_k$ is also called label shift/target shift \citep{zhang2013domain, lipton2018detecting}. \cite{zhang2013domain} also allows for a location-scale transformation in $x|y$. When only label shift happens, a better prediction model can be constructed through sample reweighting using the labeled training data and unlabeled out-of-sample data. 

The extra term $\epsilon\cdot e(x)$ corresponds to the proportion and distribution of outliers. We do not want to make a prediction if a sample comes from the outlier distribution.  There is an enormous literature on the outlier detection problem, and readers who are interested can find a thorough review of traditional outlier detection approaches in \cite{hodge2004survey, chandola2009anomaly}. Here, we go back to to the density-level set. Both BCOPS and the density-level set are based on the idea of prediction sets/tolerance regions/minimum volume sets\citep{wilks1941determination,wald1943extension, chatterjee1980asymptotically, li2008multivariate}, where for each observation $x$, we assign it a prediction set $C(x)$ instead of a single label so as to minimize certain objective, usually the length or volume of $C(x)$, while having some coverage requirements.  As we pointed out before, the density-level set is the optimal solution when $\mu(x)\propto 1$\citep{lei2013distribution, hechtlinger2018cautious}:
\begin{align}
&\min \int |C(x)| d x\nonumber \\
s.t.\;\;& P_{k}(k\in C(x)) \geq 1-\alpha, \;\forall k = 1,\ldots, K\nonumber
\end{align} 
While the density-level prediction set achieves optimality with $\mu(x)$ being the Lebesgue measure, it is not obvious that $\mu(x)\propto 1$ is a good choice. In section \ref{sec:introduction}, we observe that the density-level set has lost the contrasting information between different classes, and choosing $\mu(x)\propto 1$ is a reason for why it happens.
The work of \cite{herbei2006classification, bartlett2008classification} is closely related to the case where $\mu(x)=f(x)$, the in-sample density. When $\mu(x) = f(x)$, we encounter the same problem as in the usual classification methods that learn $P(y|x)$ and could assign confident predictions to test samples which are far from the training data. As a contrast, BCOPS choses $\mu(x)$ to utilize as much information as possible to minimize   ${\rm Err} = E_{x,y\sim f_{test}(x)}\sum_{k\neq y}\mathbbm{1}_{ k \in C(x)}$, which usually leads to not only good predictions for inliers but also abstentions for the outliers.

In an independent recent work,  \cite{barber2019conformal} also used information from the unlabeled out-of-sample data, under a different model and goal.
\subsection{Conformal prediction}
\label{sec:methods_conformal}
BCOPS constructs $\widehat C(x)$ using  the method of conformal prediction. We give a brief recap of the conformal prediction here for completeness. 

Let $X_1,\ldots, X_n\overset{i.i.d}{\sim} P$ and $X_{n+1}$ be a new observation. Conformal prediction considers the question whether $X_{n+1}$ also comes from $\mathcal{P}$ and aims to find a decision rule such that if $X_{n+1}$ is also independently generated from $\mathcal{P}$, than we will accept $X_{n+1}$ (to be from $\mathcal{P}$) with probability at least $1-\alpha$. The key step of the conformal prediction is to construct a real-valued conformal score function $\sigma(\{X_1,\ldots,X_{n+1}\}, x)$ of $x$ that may depend on the observations $\{X_1, \ldots, X_n, X_{n+1}\}$ but is permutation invariant to its first argument.

 Let $\sigma_i = \sigma(\{X_1,\ldots, X_n, X_{n+1}\}, X_i)$.  Then if $X_{n+1}$ is also independently generated from $P$, by symmetry, we have that
\[
s_i = \frac{1}{n+1}\sum^{n+1}_{j=1}\mathbbm{1}_{\sigma_i \geq \sigma_j}
\]
is uniformly distributed on $\{\frac{1}{n+1},\frac{2}{n+1},\ldots, \frac{n}{n+1}, 1\}$ (if there is a tie, breaking it randomly). For any feature value $x$, we decide if  the set $A$ contains  $x$  by letting $X_{n+1} = x$ and consider the corresponding $s_{n+1}$:
\[
A = \{x| s_{n+1} \geq \frac{\lfloor(n+1)\alpha\rfloor}{n+1}\},
\]
Then we have $P(X_{n+1}\in A) \geq 1-\alpha$  \citep{vovk2005algorithmic}.  

The most familiar valid conformal score may be the sample splitting conformal score $\sigma(\{X_1,\ldots, X_n\}, x) = \sigma(x)$ where the conformal score function is independent of the new observation $X_{n+1}$ and observations $\{X_1,\ldots,X_n\}$ that are used to construct $s_{n+1}$ given the conformal score function (but can depend on other training data). From now on, we will call a procedure based on this independence as the sample-splitting conformal construction.

Another simple example where the conformal score function actually relies on the permutation invariance is given below:
\[
\sigma(\{X_1,\ldots, X_n, X_{n+1}\}, x) = -(x-\frac{\sum^{n+1}_{i=1}X_i}{n+1})^2
\]
Since $\sigma(\{X_1,\ldots, X_n, X_{n+1}\}, x) $ is permutation invariant on its first argument $\{X_1,\ldots, X_n, X_{n+1}\}$,  we will have the desired coverage with this score function. We call a procedure of the above type, that relies on the permutation invariance but not the independence between observations and the conformal score function, as the data-augmentation conformal construction.

In BCOPS, we estimate $v_k(x)$ that is used in eq.(\ref{eq:BCOPSsolution}) through either a sample-splitting conformal construction or a data-augmented conformal construction to have the coverage validity without distributional assumptions.
\subsection{BCOPS}
\label{sec:methods_offline}
 With the observations from the out-of-sample data, we can consider directly problem $\mathcal{P}_k$ with $\mu(x) = f_{test}(x)$:
\begin{align}
\min &\int \mathbbm{1}_{x\in A_k} f_{test}(x)d x\nonumber\\
s.t.\;\;& P_{k}(x\in A_k) \geq 1-\alpha\nonumber
\end{align}  
This has the solution
\[
A_k = \{x: v_k(x) \geq Q(\alpha; v_k, F_k)\}, \;\; v_{k}(x) = \frac{f_k(x)}{f_k(x)+f_{test}(x)}
\]
where we have applied an order-preserving transformation to the density ratio $\frac{f_k(x)}{f_{test}(x)}$ to get $v_k(x)$. Thus, for example, a test point $x$ will have an empty prediction set and be deemed an outlier if each class density $f_k(x)$ relative to the overall density $f_{test}(x)$ is low. 

If we knew $f_k(x)$, $f_{test}(x)$ and hence $v_k(x)$, we can have the oracle $A_k$ and $C(x)$. They are of course unknown: one could use the density estimation to approximate them, but this would suffer in high dimension. Instead, our proposed BCOPS constructs sets $\widehat A_k$ to approximate the above $A_k$ using the idea of the conformal prediction where the conformal score function is learned via  a supervised binary classifier $\mathcal{L}$.  When the density ratio $v_k(x)$ has low dimensional structure, the learned density ratio function from the binary classifier is often much better than that uses the density estimations directly. Since we have used conformal construction when constructing $\widehat A_k$, the constructed prediction set $\widehat C(x) = \{k: x\in \widehat A_k\}$ will also have the finite sample coverage validity. Algorithm \ref{alg:offline} gives details of its implementation. 

\begin{algorithm}
\caption{BCOPS}
\label{alg:offline}
\SetKwInOut{Input}{Input}
\SetKwInOut{Output}{Output}
\underline{function BCOPS}($D^{tr}$, $D^{te}$, $\alpha$, $\mathcal{L}$)\;

\Input{Coverage level $\alpha$, a binary classifier $\mathcal{L}$, labeled training data $D^{tr}$, unlabeled test data $D^{te}$.}
\Output{For each $x\in D^{te}$, the prediction set $\widehat C(x)$.}
\begin{enumerate}
\item Randomly split the training and test data into  $\{D^{tr}_1, D^{tr}_2\}$  and $\{D^{te}_1, D^{te}_2\}$. Let $D^{tr}_{k,1}$, $D^{tr}_{k,2}$ contain samples from class $k$ in $D^{tr}_1$, $D^{tr}_2$ respectively

\item For each $k$, apply  $\mathcal{L}$ to $\{D^{tr}_{k,1}, D^{te}_1\}$ to separate $D^{tr}_{k,1}$ from  $D^{te}_1$ and learn a prediction function $\hat v_{k,1}(x)$ for $v_{k}(x) = \frac{f_k(x)}{f_k(x)+f_{test}(x)}$. Do the same thing with  $\{D^{tr}_{k,2}, D^{te}_2\}$, and denote the learned  prediction function by $\hat v_{k,2}(x)$. 

\item For $x\in D^{te}$, letting $t$ be $\in \{1,2\}$ such that $x\in D^{te}_t$, and $t' = \{1,2\}\setminus t$. We construct
\[
s_k(x) = \frac{1}{|D^{tr}_{k,t}|+1}\sum_{z\in D^{tr}_{k,t}\cup\{x\}}\mathbbm{1}_{\hat v_{k,t'}(x) \geq \hat v_{k,t'}(z)}
\]
and $\widehat A_k = \{x:s_k(x) \geq \frac{\lfloor(|D^{tr}_{k,t}|+1)\alpha\rfloor}{|D^{tr}_{k,t}|+1}\}$, $\widehat C(x) = \{k: x\in \widehat A_k\}$.
\end{enumerate}
\end{algorithm}
\begin{remark}
Algorithm \ref{alg:offline} uses the sample-splitting conformal construction. We can also use the data augmentation conformal prediction instead. For a new observation $x$ and class $k$, we can consider the augmented data $D_k=\{x_{k,1},\ldots, x_{k, n_k}, x\}$,  where $x_{k,i}$ for $i=1,\ldots, n_k$ are samples from class $k$ in the training data set, and  build a classifier separating $D_k$ from $D^{te}\setminus\{x\}$, the test data excluding $x$. For each new observation, we let the trained prediction model be $\hat v_k(.|x)$ and let 
\[
s_k(x) = \frac{1}{n_k+1}\sum_{z\in D^{tr}_{k}}\mathbbm{1}_{\hat v_{k}(x|x) \geq \hat v_{k}(z|x)},\;\widehat A_k = \{x:s_k(x) \geq \frac{\lfloor(n_k+1)\alpha\rfloor}{n_k+1}\}, \;\widehat C(x) = \{k| x\in \widehat A_k\}
\]
By exchangeability, we can also have finite sample coverage guarantee using this approach (data augmentation conformal construction). However, we use the sample-splitting conformal construction in this paper to avoid  a huge computational cost. 
\end{remark}
By exchangeability, we know that the above procedure has finite sample validity \citep{vovk2009line, cadre2009clustering, lei2013distribution, lei2014classification, lei2014distribution}:

\begin{proposition}
\label{prop:prop1}
$\widehat C(x)$ is finite sample valid:
\[
P_{k}(k \in \widehat C(x)) \geq 1-\alpha,\;\forall k = 1,\ldots, K
\]
\end{proposition}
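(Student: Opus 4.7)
The plan is to apply the standard split-conformal coverage argument, adapted to BCOPS's two-fold design. Fix $k \in \{1,\ldots,K\}$, and consider a query point $x$ drawn from $F_k$ (as is implicit in the probability $P_k$). Let $t \in \{1,2\}$ denote the fold containing $x$, set $t' = \{1,2\}\setminus t$ and $n = |D^{tr}_{k,t}|$. Since the fold split is independent of the feature values, it suffices to establish the coverage bound conditionally on $\{x \in D^{te}_t\}$ for each $t$ separately and then average.

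The key structural observation is that $\hat v_{k,t'}$ is a deterministic function of $(D^{tr}_{k,t'}, D^{te}_{t'})$ alone; consequently, it is independent of both $x$ and the calibration set $D^{tr}_{k,t}$. I would condition on $\hat v_{k,t'}$ and treat it as a fixed scalar-valued function. The remaining random variables $x$ and the elements of $D^{tr}_{k,t}$ are then $n+1$ i.i.d.\ draws from $F_k$, so the scores $\hat v_{k,t'}(x)$ and $\{\hat v_{k,t'}(z) : z \in D^{tr}_{k,t}\}$ are exchangeable.

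Applying the usual rank argument (breaking ties uniformly at random, without loss of generality), the rank of $\hat v_{k,t'}(x)$ among the $n+1$ scores is uniform on $\{1,\ldots,n+1\}$, so $s_k(x)$ is uniform on $\{1/(n+1),\ldots,(n+1)/(n+1)\}$. A direct count gives
\[
P\Bigl(s_k(x) \geq \tfrac{\lfloor(n+1)\alpha\rfloor}{n+1}\Bigr) = \frac{n+2-\lfloor(n+1)\alpha\rfloor}{n+1} \geq 1-\alpha + \tfrac{1}{n+1} \geq 1-\alpha,
\]
using $\lfloor(n+1)\alpha\rfloor \leq (n+1)\alpha$. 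Marginalizing over $\hat v_{k,t'}$ and then over $t$ preserves the inequality, yielding $P_k(k \in \widehat C(x)) \geq 1-\alpha$.

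The only point requiring real care is the independence claim invoked above. Although $\hat v_{k,t'}$ is learned using the unlabeled test set $D^{te}_{t'}$, which is itself a draw from the unknown mixture $f_{test}$, the query point $x$ lies in the disjoint fold $D^{te}_t$ and the calibration samples $D^{tr}_{k,t}$ lie in the disjoint training fold, so $(x, D^{tr}_{k,t})$ remains independent of $\hat v_{k,t'}$ and the exchangeability is not spoiled by the use of test data to train $\hat v_{k,t'}$. No assumption on $f_{test}$ is needed, which is precisely why the resulting guarantee is distribution-free. The data-augmentation variant in the remark is handled analogously, with exchangeability from permutation invariance of $\hat v_k(\cdot|x)$ in its first argument replacing the independence argument above.
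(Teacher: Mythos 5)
Your proof is correct and follows essentially the same route as the paper, which simply asserts the result by appeal to exchangeability of the calibration scores under the sample-splitting conformal construction (citing Vovk et al.\ and Lei et al.) without writing out the details. Your version correctly fills in those details: the independence of $\hat v_{k,t'}$ from $(x, D^{tr}_{k,t})$, the resulting uniformity of the rank, and the counting bound $\lfloor(n+1)\alpha\rfloor \leq (n+1)\alpha$.
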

 Algorithm \ref{alg:offline} produces prediction set $\widehat C(x)$ that achieves the same objective as the oracle prediction set $C(x)$ (we will refer it as the asymptotic optimality) if $\hat v_{k,1}$, $\hat v_{k,2}$ are good estimations of $v_k(x)$, and $v_k(x)$ is well-behaved. A more rigorous statement can be found in section \ref{sec:theory}.

\subsection{A simulated example}
\label{sec:methods_example} 
In this section, we provide a simple simulated example to illustrate differences between three different methods: (1) BCOPS, (2) density-level set where $\mu(x) \propto 1$ in $\mathcal{P}$, and (3) in-sample ratio set where $\mu(x) = f(x)$ in $\mathcal{P}$. All three methods have followed the sample-splitting conformal construction with the level $\alpha = 0.05$. For both BCOPS and the in-sample ratio set, we have used the random forest classifier to learn $v_k(x)$ ($v_k(x) = \frac{f_k(x)}{f_k(x)+f_{test}(x)}$ for BCOPS and $v_k(x) = \frac{f_k(x)}{f(x)}$ for the in-sample ratio set).

We let $x\in \real^{10}$ and generate 1000 training samples, half  from class 1 and the other half  from  class 2. The feature $x$ is generated as
\[
x_{1} \sim \left\{\begin{array}{lcr}N(0,1)&\mbox{if}&y = 1 \\ N(3,0.5)&\mbox{if}&y = 2 \end{array}\right.,\;\;x_{j} \sim N(0,1),\; j = 2,\ldots,10
\] 
We have 1500 test samples, one third from class 1, one third  from class 2, and the other one third of them follow the distribution (outliers, class $ \bold{R}$):
\[
x_{2}  \sim N(3,1),\;\;x_{j} \sim N(0,1),\; j \neq 2
\] 
In this example, we let the learning algorithm $\mathcal{L}$ be the random forest classifier. Figure \ref{fig:illustration5} plots the first two dimensions of the test samples and shows the regions with 95\% coverage for BCOPS, density-level set and the in-sample ratio set. The upper left plot colors the data based on its correct label, and is colored black/blue/red if its label is class 1/class 2/outliers. For the remaining three plots,  a sample is colored black if $C(x) = \{1\}$, blue if $C(x) = \{2\}$, green if $C(x) = \{1,2\}$ and red if $C(x) = \emptyset$. Table \ref{tab:illustrationSec2}  shows results of abstention rate in outliers (the higher, the better), prediction accuracy for data from class $1$ and $2$ (a prediction is called correct if $C(x) = \{y\}$ for a sample $(x, y)$), coverages for class 1 and class 2.

\begin{table}[ht]
\centering
\caption{\em An illustrative example. The second column $ \bold{R}$ is the abstention rate of outliers, the third column is the prediction accuracy, the fourth and fifth columns are the coverage for samples from class 1 and class 2. }
\label{tab:illustrationSec2}
\begin{tabular}{lrrrr}
  \hline
 & $ \bold{R}$& accuracy & coverage I & coverage II \\ 
  \hline
density-level & 0.46 & 0.57 & 0.96 & 0.97 \\ 
in-sample ratio & 0.20 & 0.94 & 0.94 & 0.95 \\ 
BCOPS & 0.84 & 0.95 & 0.96 & 0.97 \\ 
   \hline
\end{tabular}
\end{table}
 We can see that the BCOPS achieves much higher abstention rate in outliers, and much higher accuracy in the observed classes compared with the density-level set, while the in-sample ratio set has similar accuracy as the BCOPS but the lowest abstention rate in this example.  
 
 We also observe that small $\alpha$ might lead to making predictions on many outliers which are far from the training data (especially for the density-level set and the in-sample ratio set). While the power for outlier detection varies for different approaches and problems, we want to learn about the outlier abstention rate no matter what kind of method we are using. In section \ref{sec:methods2}, we provide methods for this purpose.
\begin{figure}
\caption{\em A simulated example. The upper left plot shows the class label for each sample in the test data set. The upper right, lower left, lower right plots corresponds to the prediction results using the density-level set, in-sample ratio set and BCOPS respectively.  The upper left plot colors the data based on its correct label, and is colored black/blue/red if its label is class 1/class 2/outliers. For the remaining three plots,  a sample is colored black if $C(x) = \{1\}$, blue if $C(x) = \{2\}$, green if $C(x) = \{1,2\}$ and red if $C(x) = \emptyset$.}
\label{fig:illustration5}
\begin{center}
\includegraphics[width = 1\textwidth, height = 1\textwidth]{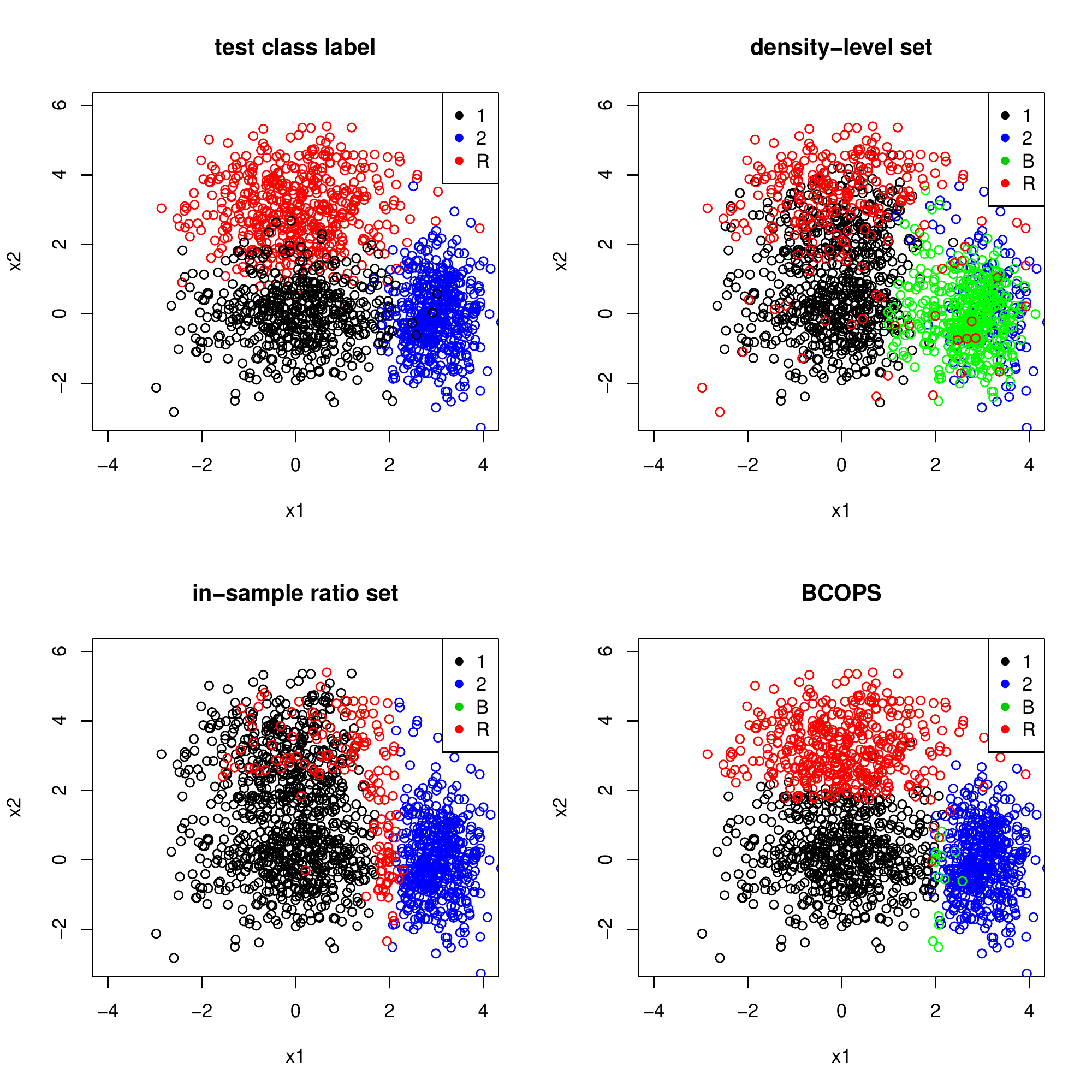}
\end{center}
\end{figure}

\section{Outlier abstention rate and false labeling rate}
\label{sec:methods2}
In the previous section, we proposed BCOPS for prediction set construction at a given level $\alpha$. In this section, we describe a regression-based method to estimate the test set mixture proportions $\tilde \pi_k$ for $k=1,\ldots, K$, and using this, we estimate the outlier abstention rate and FLR (false labeling rate). The outlier abstention rate is the expected proportion of outliers with an empty prediction set. The FLR is the expected ratio between the number of outlier given a label and total number of samples given a label.  For a fixed prediction set function $C(x)$, its outlier abstention rate $\gamma$ and FLR are defined as
\begin{align*}
\gamma &\coloneqq P_{\bold{R}}(C(x) = \emptyset)\\
\text{FLR} &\coloneqq E\left(\frac{|\{x\in D^{te}: y(x) = \bold{R}, C(x)\neq \emptyset\}|}{|\{x\in D^{te}: C(x)\neq \emptyset\}|\vee 1}\right)
\end{align*}
The expectation is taken over the distribution of $x$.  The outlier abstention rate is power for $C(x)$ in terms of the outlier detection while FLR controls the percent of outliers among samples with predictions. 

Information about the outlier abstention rate and FLR can be valuable when picking $\alpha$. For example,  while we want to have both small $\alpha$ and large $\gamma$,  $\gamma$ is negatively related to $\alpha$. As a result, we might want to choose $\alpha$ based on the tradeoff curve of $\alpha$ and $\gamma$.   There are different ways that we may want to utilize $\gamma$ or FLR:
\begin{enumerate}
\item Set $\alpha \geq \alpha^*$ to control the FLR, for example, let $\alpha^* = \inf \{\alpha: \text{FLR}(\alpha) \leq 10\%\}$ where $\text{FLR}(\alpha)$ is the FLR at the given $\alpha$.
\item Set $\alpha = \alpha^*$ to control the abstention rate where $\alpha^*$ is the smallest $\alpha$ such that $\gamma$ is above a given threshold.

\item Set $\alpha$ without considering FLR or $\gamma$, however, in this case, we can still assign a score for each point to measure how likely it may be an outlier. For each point $x$, let $\alpha(x)$ be the largest $\alpha$ such that $C(x) = \emptyset$. We let its outlier score be $\gamma(x) \coloneqq \gamma_{\alpha(x)}$, where $\gamma_{\alpha(x)}$ is the abstention rate at the required coverage is $(1-\alpha(x))$. The interpretation for the outlier score is simple: if we want to refrain from making prediction for $\gamma$ proportion of outliers, then we do not make a prediction for samples with $\gamma(x) \geq \gamma$ even if $C(x)$ itself is non-empty. 
\end{enumerate}

\subsection{Estimation of $\gamma$ and FLR}
\label{sec:methods2-1}
When the proportion of outliers $\epsilon$ is  greater than zero in the test samples, $\gamma$ can also be expressed as
\[
\gamma = \frac{E[\mbox{Number of outliers with }C(x) = \emptyset]}{\mbox{Total number of outliers}} = \frac{E[N_{\emptyset}] - \sum^N_{k=1} N\tilde\pi_k \gamma_k}{N(1-\sum^K_{k=1}\tilde\pi_k)}
\]
where $N$ is the total number of test samples, $N_{\emptyset}$ is the total number of samples with abstention ($C(x) = \emptyset$) and $\gamma_k \coloneqq P_k(C(x) = \emptyset)$ is the abstention rate for class $k$.  The FLR can be expressed as
\[
\text{FLR}=E[\frac{\overbrace{(N-N_{\emptyset})}^{\mbox{All non-empty}}-\sum^K_{k=1}\overbrace{N\tilde \pi_k(1- \gamma_k)}^{\mbox{Class $k$ non-empty}}}{(N-N_{\emptyset})\vee 1}]
\]

When $\hat \gamma_k$ and $\hat \pi_k$, the estimates of $\gamma_k$ and $\tilde\pi_k$, are available, we can construct empirical estimates $\hat \gamma$ and $\widehat{\text{FLR}}$ of $\gamma$ and FLR:
\[
\hat \gamma = \frac{(N_{\emptyset}-\sum^K_{k=1}N\hat \pi_k \hat \gamma_k)\vee 0}{N(1-\sum^K_{k=1}\hat \pi_k)\vee 1},\;\widehat {\text{FLR}}= \frac{(N-N_{\emptyset}-\sum_k N\hat \pi_k(1-\hat \gamma_k))\vee 0}{(N-N_{\emptyset})\vee 1}
\]
\subsubsection{Estimation of $\gamma_k$}
We estimate $\gamma_k$ using the empirical distribution for class $k$ from the training data. More specifically, for BCOPS, we let
\[
\hat \gamma_k= \frac{|\{x\in D^{tr}_k: \widehat C(x) = \emptyset\}|}{|D^{tr}_k|}
\]
where $\widehat C(x)$ follows the same construction as in the BCOPS Algorithm: For $x\in D^{tr}_{k,t}$, we construct $\hat C(x)$ for $x$ using training and test samples from fold $t' \in \{1,2\}\setminus t$ as described in the BCOPS Algorithm.

\subsubsection{Estimation of $\tilde \pi_k$}
\label{sec:methods_shift}
Let $S_l$ be regions such that $P_{\bold{R}}(S_l) = 0$ for $l = 1,\ldots, K$. Then, by our model assumption, we know
\[
P_{test}(S_l) = \sum^K_{k=1}\tilde \pi_k P_k(S_l)
\]
Let ${\rP}_l = P_{test}(S_l)$ be the response vector, $\bold{P}$ be a $K\times K$ design matrix with $\bold{P}_{l,k}=P_k(S_l)$, and $\bold\Sigma = \bold{P}^T\bold{P}$. As long as $\bold\Sigma$ is invertible, $\tilde \pi$ is the solution to the regression problem that regresses $\rP$ on $\bold{P}$.  Next, we give a simple proposal trying to construct such $S_l$.

For a fixed function $\eta:\real^p\rightarrow \real^K$, let $g_{l,k}(.)$ be the density of $\eta_l$ in class $k$. If the outliers happen with probability 0 at regions of $\eta_l$ where class $l$ has relatively high density, we can let $S_l$ be the region with relatively high $g_{l,l}(.)$: 
\begin{enumerate}
\item Let $\circ$ be the composition operator and $S_l = \{z: g_{l,l}(z) \geq Q(\zeta; g_{l,l}\circ \eta_l, F_l) \}$,  $P_{l} = P_{test}(\eta_l(x)\in S_l)$,   $\bold P_{l,k} = P_{k}(\eta_l(x)\in S_l)$ for a user-specific constant $\zeta\in (0,1)$ specifying the separation between inliers and outliers.
\item We would like to solve $J(\eta) \coloneqq \min_{\pi}\| \rP- \bold{P}\pi\|^2_2$, the the oracle  problem based on the function $\eta$.
\end{enumerate}
We recommend taking $\eta_l(x)=\log \frac{f_{l}(x)}{f_{test}(x)}$, the log-odd ratio separating class $l$ from the test data, since it automatically tries to separate class $l$ from other classes, including the outliers. 

Neither $\eta$ nor $\rP, \bP$ given $\eta$ will be observed. In practice, we use sample-splitting to estimate $\eta$ in one fold of the data and estimate $\rP$, $\bP$ empirically  in the other fold conditional on the estimated $\eta$. See Algorithm \ref{alg:shift}, MixEstimate (mixture proportion estimation), for details. 
\begin{algorithm}[H]
\caption{\em MixEstimate}
\label{alg:shift}
\SetKwInOut{Input}{Input}
\SetKwInOut{Output}{Output}
\underline{function MixEstimate}($D^{tr}$, $D^{te}$, $\zeta$, $\mathcal{L}$)\;

\Input{Left-out proportion $\zeta$, a binary classifier $\mathcal{L}$, labeled training data $D^{tr}$, unlabeled test data $D^{te}$. By default, $\zeta  = 0.2$.}
\Output{Estimated mixture proportion $\{\hat \pi_k,\; k = 1,\ldots, K\}$}
\begin{enumerate}
\item  Randomly split the training and test data into $\{D^{tr}_1, D^{tr}_2\}$  and $\{D^{te}_1, D^{te}_2\}$. For $t = 1,2$, let $D^{tr}_{k,t}$ contain samples from class $k$ in $D^{tr}_t$ , and apply $\mathcal{L}$ to $\{D^{tr}_{k,t}, D^{te}_t\}$ to separate samples from class $k$  and the test set,  we get $\hat \eta^t_{l}(x)$ as the estimate to  $\eta_l(x) = \log \frac{f_l(x)}{f_{test}(x)}$.
\item For fold $t = 1,2$: let $t' = \{1,2\}\setminus t$, and
\begin{itemize}
\item let $\hat F^{t'}_l$ and $\hat g^{t'}_{l,l}(.)$ be empirical distribution of class $l$ and the gaussian kernel density estimation of the density of $\hat \eta^t_l(x)$ using $D^{tr}_{l,{t'}}$. 
\item the empirical problem $\hat J(\hat \eta^t)$ is constructed with empirical probabilities of each class in fold $t'$ falling into regions $\widehat S_l =\{t: \hat g^{t'}_{l,l}(t) \geq Q(\zeta; \hat g^{t'}_{l,l}\circ \hat \eta^t, \hat F^{t'}_l)\}$. Let $\hat \pi^t_k$ for $k = 1,\ldots, K$ be the solutions to $\hat J(\hat \eta^t)$.
\end{itemize}
\item Output the average mixture proportion estimate and let  $\hat \pi_{k} = \frac{\hat \pi_{k,1}+\hat \pi_{k,2}}{2}, \;\forall k = 1,\ldots, K$.
\end{enumerate}
\end{algorithm}
\begin{remark}
\begin{itemize}
\item  Our proposal can be viewed as an extension to the BBSE method in \cite{lipton2018detecting} under the presence with outliers.
\item In practice, we can add a constraint on the optimization variables $\pi_k$ and require that  $\sum^K_{k=1} \pi_k \leq 1$ and  $\pi_k\geq 0$.
This constraint guarantees that both $\tilde \pi_k$ and the outlier proportion $\epsilon$ are non-negative.
\end{itemize}
\end{remark}

We show in section \ref{sec:theory} that the estimates  from MixEstimate will converge to $\tilde \pi_k$ under proper assumptions. As a continuation of the example shown in section \ref{sec:methods_example}. Figure \ref{fig:illustration6} shows curves of estimated FLR, estimated outlier abstention rate $\hat \gamma$, as well as the FLP (false labeling proportion), which is the sample version of FLR using current test data, and the sample version of the $\gamma$ against different $\alpha$. 
\begin{figure}[H]
\caption{\em An illustrative example, continuation of  section \ref{sec:methods_example}. The red solid/dashed curves show the FLP and estimated FLR against different $\alpha$. The blue solid/dashed curves show the actual outlier abstention rate $\gamma$ and estimated $\gamma$ against different $\alpha$.}
\label{fig:illustration6}
\begin{center}
\includegraphics[width = .6\textwidth, height = .4\textwidth]{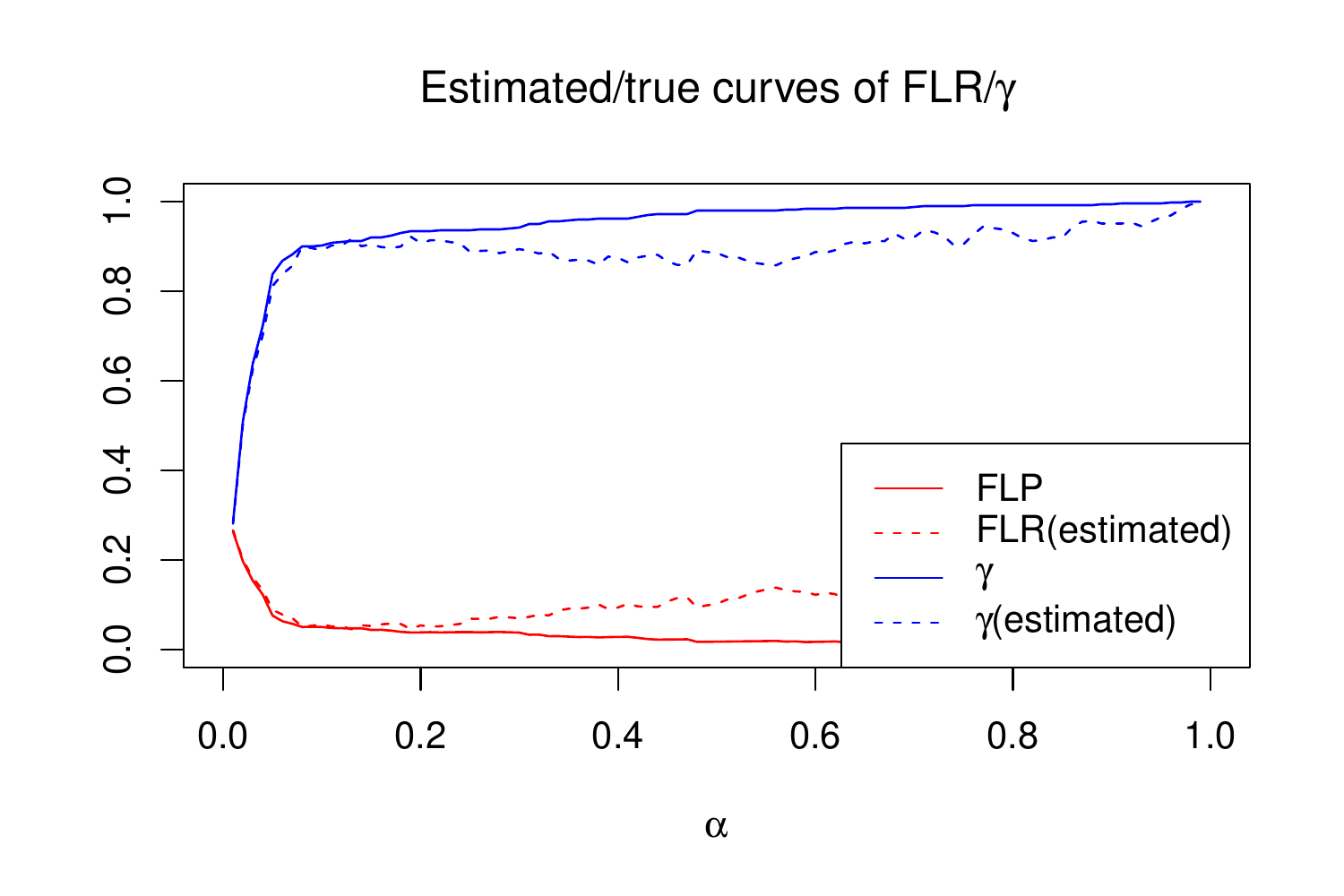}
\end{center}
\end{figure}

\section{Asymptotic properties of BCOPS and MixEstimate}
\label{sec:theory}
Let $n_k$ be the sample size of class $k$ in the training data and $n$ be the size of the training data. Let $N$ be the size of the test data. In this section, we consider the asymptotic regime where $n\rightarrow\infty$, and assume that  $\lim_{n\rightarrow\infty}\frac{N}{n} \geq c$ for a constant $c > 0$ and $\frac{n_k}{n}\rightarrow c_k$ for a constant $c_k \in (0,1)$, $k = 1,\ldots, K$. In this asymptotic regime, we show that 
\begin{enumerate}
\item The  prediction set $\widehat C(x)$ constructed using BCOPS achieves the same loss as the oracle prediction set $C(x)$  asymptotically if the estimation of $v_k(x) = \frac{f_k(x)}{f_k(x)+f_{test}(x)}$ is close to it, and under some conditions on the densities of $x$ and distribution of $v_k(x)$ for $k =1,\ldots, K$.
\item The mixture proportion estimations converge to the the true out-of-sample mixture proportions $\tilde \pi_k$ if the outliers are rare when the observed classes have high densities, and under some conditions on the densities of $\eta_l(x)$, the functions used to construct $S_l$.
\end{enumerate}

\subsection{Asymptotic optimality of the  BCOPS}
Let $\hat v_k(x)$ be the estimate of $v_k(x)$, representing either $\hat v_{k,1}(x)$ or $\hat v_{k,2}(x)$ in the  BCOPS Algorithm.
\begin{assumptions}
\label{ass:ass1}
Densities $f_1(x), \ldots, f_k(x), e(x)$ are upper bounded by a constant. There exist constants $0 < c_1 \leq c_2$ and $\delta_0,\;\gamma > 0$, such that for $k = 1,\ldots, K$,  we have
\[
c_1|\delta|^\gamma \leq |P_k(\{x| v_k(x) \leq Q(\alpha; v_k, F_k)+\delta\}) - \alpha| \leq c_2 |\delta|^\gamma, \forall -\delta_0\leq \delta \leq \delta_0
\]
\end{assumptions}
\begin{remark}
We require that the underlying function $v_k(x)$ is neither too steep nor too flat around the boundary of the optimal decision region $A_k$. This makes sure that this boundary is not too sensitive to small errors in estimating $v_k(x)$ and the final loss is not too sensitive to small changes in the decision region.
\end{remark}
\begin{assumptions}
\label{ass:ass2}
The estimated function $\hat v_k(x)$ converges to the true model $v_k(x)$: there exists constants $B$, $\beta_1, \beta_2> 0$ and a set $A_n$ of $x$ depending on $n$, such that,  as $n\rightarrow \infty$, we have  $P(\sup_{x\in A_n}|\hat v_{k}(x) - v_{k}(x)|< B(\frac{\log n}{n})^{\frac{\beta_1}{2}})\rightarrow 1$, $P_{test}(x\in A_n)\geq1-B(\frac{\log n}{n})^{\frac{\beta_2}{2}}$.
\end{assumptions}
\begin{remark}
For such an assumption to hold in high dimensional setting, the classifier $\mathcal{L}$ in BCOPS usually needs to be parametric and we will also need some parametric model assumptions depending on $\mathcal{L}$. For example, when we let $\mathcal{L}$ be the logistic regression with lasso penalty in BCOPS, we could require the approximate correctness of the logistic model, nice behavior of features and sparsity in signals \citep{van2008high}.
\end{remark}
\begin{theorem}
\label{thm:thm1}
Under Assumptions \ref{ass:ass1}-\ref{ass:ass2}, for any fixed level $\alpha > 0$, let  $C(x)$ be the oracle BCOPS prediction set, for a large enough constant $B$, we have
\[
P(\int (|\widehat C(x)|-|C(x)|)f_{test}(x) dx \geq B(\frac{\log n}{n})^{\frac{\min(\gamma\beta_1,\beta_2,1)}{2})}) \rightarrow 0
\]
\end{theorem}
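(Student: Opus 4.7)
My plan is to decompose
$\int(|\widehat C(x)|-|C(x)|)f_{test}(x)\,dx = \sum_{k=1}^K (P_{test}(\widehat A_k) - P_{test}(A_k^\ast))$
and upper-bound each summand by the $P_{test}$-mass of the symmetric difference $\widehat A_k \triangle A_k^\ast$. Here the oracle set is $A_k^\ast = \{x: v_k(x) \geq q^\ast\}$ with $q^\ast = Q(\alpha; v_k, F_k)$, and by the definition of the conformal rank the BCOPS set can be written (up to ties) as $\widehat A_k = \{x: \hat v_{k,t'}(x) \geq \hat q_{k,t}\}$ with $\hat q_{k,t}$ the empirical $\lfloor(|D^{tr}_{k,t}|+1)\alpha\rfloor/(|D^{tr}_{k,t}|+1)$ quantile of $\hat v_{k,t'}$ over $D^{tr}_{k,t}$. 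The task then reduces to (i) controlling the scalar quantile error $|\hat q_{k,t}-q^\ast|$ and (ii) converting this, together with the uniform estimation error $\delta_n := B(\log n/n)^{\beta_1/2}$ from Assumption \ref{ass:ass2}, into a bound on the $P_{test}$-mass of the strip $\{x: v_k(x)\in[q^\ast-L, q^\ast+L]\}$ with $L = |\hat q_{k,t}-q^\ast| + \delta_n$.

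For step (i), I would start from $\hat F_{k,t}^{\hat v_{k,t'}}(\hat q_{k,t}) = \alpha + O(1/n)$ coming from the empirical-quantile definition. Because the fold split makes $D^{tr}_{k,t}$ independent of $\hat v_{k,t'}$, a DKW inequality on the transformed samples gives $\sup_t|\hat F_{k,t}^{\hat v_{k,t'}}(t) - F_k^{\hat v_{k,t'}}(t)| = O_p(\sqrt{\log n/n})$. On the event where $\sup_{A_n}|\hat v_{k,t'}-v_k|\leq \delta_n$, the population CDF $F_k^{\hat v_{k,t'}}$ is sandwiched by $F_k^{v_k}(\cdot\pm\delta_n)$ up to an additive $P_k(A_n^c)$, which Assumption \ref{ass:ass2} together with $f_{test}\geq \tilde\pi_k f_k$ bounds by $\tilde\pi_k^{-1}B(\log n/n)^{\beta_2/2}$. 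Combining these ingredients gives $|F_k^{v_k}(\hat q_{k,t}) - \alpha| \leq \varepsilon_n$ with $\varepsilon_n = O((\log n/n)^{\min(1,\beta_2)/2})$, and inverting through Assumption \ref{ass:ass1} (which implies $|F_k^{v_k}(q^\ast + s) - \alpha|\geq c_1 |s|^\gamma$ for $|s|\leq \delta_0$) produces $|\hat q_{k,t}-q^\ast|\leq \delta_n + (\varepsilon_n/c_1)^{1/\gamma}$, hence $L \leq 2\delta_n + C\varepsilon_n^{1/\gamma}$.

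For step (ii), on the good event intersected with $A_n$, any $x\in \widehat A_k \triangle A_k^\ast$ must satisfy $v_k(x)\in[q^\ast-L, q^\ast+L]$ since $|\hat v_{k,t'}(x)-v_k(x)|\leq \delta_n$ there, and the upper inequality in Assumption \ref{ass:ass1} gives $P_k(v_k\in[q^\ast-L,q^\ast+L])\leq 2c_2 L^\gamma$. To convert to $P_{test}$, I use $v_k = f_k/(f_k+f_{test})$, so that $f_{test}/f_k = (1-v_k)/v_k$ is bounded by a constant $M$ on a neighborhood of $\{v_k = q^\ast\}$ (Assumption \ref{ass:ass1} together with bounded densities forces $q^\ast\in(0,1)$). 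Therefore $P_{test}(\widehat A_k \triangle A_k^\ast) \leq 2Mc_2 L^\gamma + P_{test}(A_n^c)$, and using $L^\gamma \leq C'(\delta_n^\gamma + \varepsilon_n)$ with $P_{test}(A_n^c)\leq B(\log n/n)^{\beta_2/2}$ yields an overall rate of $(\log n/n)^{\min(\gamma\beta_1,\beta_2,1)/2}$. Summing over $K$ classes multiplies by a constant, and the underlying good event has probability tending to one by Assumptions \ref{ass:ass1}--\ref{ass:ass2}, giving the theorem.

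The main obstacle I foresee is the quantile-inversion step: three distinct error sources --- empirical-CDF fluctuation, pointwise estimation error of $\hat v_{k,t'}$, and the mass outside $A_n$ --- must be combined and passed through the margin exponent $\gamma$ without losing the final $\min(\gamma\beta_1,\beta_2,1)$ scaling, which is particularly finicky when $\gamma>1$ (the $\varepsilon_n^{1/\gamma}$ is raised back to the $\gamma$-th power only after several approximations). A secondary bookkeeping issue is transferring Assumption \ref{ass:ass2} from $P_{test}$ to a $P_k$-bound on $A_n^c$ and explicitly verifying that $q^\ast$ is interior to $(0,1)$ so $f_{test}/f_k$ stays bounded near the boundary; both should be routine but merit an explicit verification in the write-up.
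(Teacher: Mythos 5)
Your proposal follows essentially the same route as the paper's own proof: reduce the excess loss to the $f_{test}$-mass of a strip $\{\,|v_k(x)-Q(\alpha;v_k,F_k)|\lesssim R_{n,1}+R_{n,2}\,\}$ around the oracle threshold, control the quantile error by a DKW bound inverted through the lower inequality of Assumption \ref{ass:ass1}, bound the strip's $P_k$-mass by the upper inequality, convert $P_k$ to $P_{test}$ using that $f_{test}/f_k=(1-v_k)/v_k$ stays bounded since $Q(\alpha;v_k,F_k)>0$, and absorb the mass of $A_n^c$ separately, yielding the same $\min(\gamma\beta_1,\beta_2,1)/2$ exponent. The one substantive deviation is your bound $P_k(A_n^c)\le \tilde\pi_k^{-1}P_{test}(A_n^c)$ in the quantile step, which silently assumes $\tilde\pi_k>0$ even though the model allows $\tilde\pi_k=0$; this is a refinement the paper's proof does not attempt (it treats the empirical-quantile comparison as if all training points lay in $A_n$), but as written it needs either the assumption $\tilde\pi_k>0$ or a direct hypothesis on $P_k(A_n^c)$.
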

Proof of Theorem \ref{thm:thm1} is given in Appendix \ref{app:BCOPSproof}.

\subsection{Asymptotic consistency of  the  mixture proportion estimates}
\label{sec:theory_shift}
In this section,  let $\hat \eta$ represent $\hat \eta^t$ for $t = 1,2$ in label shift estimation Algorithm (Algorithm \ref{alg:shift}).   As defined in section \ref{sec:methods_shift}: $\zeta$ is a user-specific positive constant, $\eta: \real^p\rightarrow \real^K$ is a fixed function, and $g_{l,k}(t)$ is  the density of $\eta_l$ in class $k$, $S_l= \{t:g_{l,l}(t) \geq Q(\zeta; g_{l,l}\circ \eta_l, F_l)\}$ and $\rP$, $\bP$ are the oracle response and design matrix based on $\eta$, $\bSigma = \bP^T\bP$. We also let  $\hat J(\eta)$ be the problem with empirical $\rP$ and $\bP$ from fold $t = 1$ or $2$, and $h_n$ be the bandwidth of the gaussian kernel density estimation in  Algorithm \ref{alg:shift}. The bandwidth $h_n$ satisfies $h_n\rightarrow 0$ and $\frac{\log n}{nh_n}\rightarrow 0$.


\begin{assumptions}
\label{ass:ass3}
For $k = 1,\ldots, K, \bold{R}$ and $l=1,\ldots, K$, the density  $g_{l,k}(t)$ is bounded, and $g_{l,l}(t)$ is $H\ddot{o}lder$ continuous (e.g. there exist constants $ 0<\gamma\leq 1$ and $B$, such that $|g_{l,l}(z) - g_{l,l}(z')|\leq B|z-z'|^{\gamma}$ for $\forall z, z'\in \real$).
\end{assumptions}
\begin{assumptions}
\label{ass:ass4}
There exist constants $\gamma,\; c_1, \;c_2 > 0$ and $\delta_0 > 0$, such that for $\forall l = 1,\ldots,K$:
\[
c_1 |\delta|^{\gamma}\leq  |P_{l}(g_{l,l}(t)\leq Q(\zeta; g_{l,l}\circ \eta_l, F_l)+\delta)-\zeta|\leq c_2|\delta|^{\gamma} ,  \forall -\delta_0\leq \delta \leq \delta_0
\]
\end{assumptions}
\begin{remark}
Assumption \ref{ass:ass4} is similar to Assumption \ref{ass:ass1}, it asks that $g_{l,l}(t)$ is neither too steep nor too flat around the boundary of $S_l$.
\end{remark}
\begin{assumptions}
\label{ass:ass5}
$P_{\bold{R}}(\eta_l(x)\in S_l) = 0,\;\forall l = 1,\ldots,K$, and $\bSigma$ is invertible  with the smallest eigenvalue $\sigma_{\min} \geq c$ for a constant $c > 0$.
\end{assumptions}


\begin{theorem}
\label{thm:shift}
Under Assumption \ref{ass:ass3}-\ref{ass:ass5}, let $\{\hat \pi_k, k = 1,\ldots,K\}$ be solutions to $\hat J(\eta)$. Then $\hat \pi_k\overset{p}{\rightarrow} \tilde \pi_k$  as $n\rightarrow\infty$.
\end{theorem}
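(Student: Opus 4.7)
The plan is to show that $\hat\pi$ solves a linear system whose coefficient matrix and right hand side converge in probability to those of an oracle system whose unique solution is $\tilde\pi$. By Assumption \ref{ass:ass5}, $P_{\bold R}(\eta_l(x)\in S_l)=0$ for every $l$, so the mixture representation $f_{test}=\sum_k\tilde\pi_kf_k+\epsilon e$ yields $\rP_l=\sum_k\tilde\pi_k\bP_{l,k}$, i.e.\ $\rP=\bP\tilde\pi$. Invertibility of $\bSigma=\bP^T\bP$ then makes $\tilde\pi=\bSigma^{-1}\bP^T\rP$ the unique minimizer of $J(\eta)$, and analogously $\hat\pi=\hat\bSigma^{-1}\hat\bP^T\hat\rP$ once $\hat\bSigma$ is invertible. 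The whole argument therefore reduces to showing $\hat\bP\overset{p}{\to}\bP$ and $\hat\rP\overset{p}{\to}\rP$ entrywise, after which the continuous mapping theorem finishes the job.

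The delicate step is the convergence of
\[
\hat\bP_{l,k}=\hat P_k(\eta_l(x)\in\widehat S_l),\qquad \hat\rP_l=\hat P_{test}(\eta_l(x)\in\widehat S_l),
\]
where $\widehat S_l=\{t:\hat g^{t'}_{l,l}(t)\geq \hat Q_l\}$ is the estimated level set with empirical quantile $\hat Q_l=Q(\zeta;\hat g^{t'}_{l,l}\circ\eta_l,\hat F^{t'}_l)$. I would split
\[
|\hat\bP_{l,k}-\bP_{l,k}|\leq \bigl|\hat P_k(\eta_l\in\widehat S_l)-P_k(\eta_l\in\widehat S_l)\bigr|+\bigl|P_k(\eta_l\in\widehat S_l)-P_k(\eta_l\in S_l)\bigr|.
\]
The first piece is a standard Glivenko--Cantelli / VC bound uniformly over half-line thresholds, giving an $O_p(n^{-1/2})$ rate. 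The second piece, the \emph{region-mismatch} term, is where the real work lies and is the main obstacle of the proof.

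To control the region mismatch I would first invoke standard uniform-consistency results for the Gaussian kernel density estimator: under Assumption \ref{ass:ass3} (bounded densities, Hölder continuous $g_{l,l}$) and the bandwidth conditions $h_n\to 0$, $\log n/(nh_n)\to 0$, one obtains $\sup_t|\hat g^{t'}_{l,l}(t)-g_{l,l}(t)|=O_p\bigl((\log n/(nh_n))^{1/2}+h_n^\gamma\bigr)\eqqcolon \eta_n$. This uniform bound, combined with Assumption \ref{ass:ass4} (lower bound $c_1|\delta|^\gamma$ on how mass moves with the threshold), implies $|\hat Q_l-Q(\zeta;g_{l,l}\circ\eta_l,F_l)|\overset{p}{\to}0$: the quantile map is continuous precisely because the level-set boundary is neither too steep nor too flat. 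Writing $\widehat S_l\triangle S_l\subseteq\{t:|g_{l,l}(t)-Q_l|\leq C\eta_n\}$ for a vanishing $\eta_n$, the upper bound in Assumption \ref{ass:ass4} gives $P_l(\eta_l\in\widehat S_l\triangle S_l)=O_p(\eta_n^\gamma)$. Since the densities $g_{l,k}$ are all uniformly bounded (Assumption \ref{ass:ass3}), the same bound transfers from $P_l$ to $P_k$ and to $P_{test}$ up to a constant multiple via a change-of-density comparison on the common set.

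Putting the pieces together, $\hat\bP_{l,k}\overset{p}{\to}\bP_{l,k}$ and $\hat\rP_l\overset{p}{\to}\rP_l$ uniformly in $l,k$. Assumption \ref{ass:ass5} ensures $\hat\bSigma$ is invertible with probability tending to one, and the continuous mapping theorem gives $\hat\pi=\hat\bSigma^{-1}\hat\bP^T\hat\rP\overset{p}{\to}\bSigma^{-1}\bP^T\rP=\tilde\pi$. Averaging the two fold-wise estimators as in Algorithm \ref{alg:shift} preserves the limit. The only genuinely delicate ingredient is the level-set boundary control via the two-sided Assumption \ref{ass:ass4} together with uniform KDE consistency; once that is in place, the regression step is a routine plug-in argument.
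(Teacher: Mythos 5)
Your proposal is correct and follows essentially the same route as the paper: the same decomposition of $|\hat\bP_{l,k}-\bP_{l,k}|$ into an empirical-fluctuation term plus a region-mismatch term, uniform KDE consistency plus the two-sided condition of Assumption \ref{ass:ass4} to stabilize the quantile and bound $P_l(\eta_l\in\widehat S_l\triangle S_l)$, the density-ratio transfer to other classes using that $g_{l,l}$ is bounded below near the level $g_{l,\zeta}>0$ on the mismatch region, and a final plug-in/continuous-mapping step. The one spot where your write-up is glibber than the paper is the first term for $k=l$ (and for $\rP_l$), where $\hat g_{l,l}$, $\hat Q_l$ and the empirical measure are built from the same fold, so a bare Glivenko--Cantelli claim over ``half-line thresholds'' does not directly apply to level sets of the random $\hat g_{l,l}$; the paper resolves this by sandwiching $\{\hat g_{l,l}(\eta_l(x))\geq \hat Q_l\}$ between level sets of the fixed $g_{l,l}$ at thresholds $\hat Q_l\pm R_{n,1}$ and then invoking the DKW bound plus Assumption \ref{ass:ass4}, i.e.\ the same sandwich device you already use for the region-mismatch term.
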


By the independence of the two folds, once we have learned $\hat \eta$ in Algorithm \ref{alg:shift}, we can condition on it and treat it as fixed. Hence, we have Corollary \ref{col:col1} as a direct application of Theorem.
\begin{corollary}
\label{col:col1}
Let  $\mathcal{A}_0$ be the event that Assumptions \ref{ass:ass3}-\ref{ass:ass5} are satisfied for $\eta = \hat \eta$. If $P(\mathcal{A}_0)\rightarrow 1$ as $n\rightarrow \infty$. Then, let $\{\hat \pi_k, k = 1,\ldots,K\}$ be solutions to $\hat J(\hat \eta)$, we have $\hat \pi_k\overset{p}{\rightarrow} \tilde \pi_k$  as $n\rightarrow\infty$.
\end{corollary}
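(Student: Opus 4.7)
The plan is to reduce Corollary~1 to Theorem~2 via a conditioning argument that exploits the sample splitting in Algorithm~2. First I would fix a fold $t\in\{1,2\}$, set $t'=\{1,2\}\setminus t$, and let $\mathcal{F}_t$ denote the $\sigma$-algebra generated by the fold-$t$ training and test data. The crucial observation is that $\hat\eta^t$ is $\mathcal{F}_t$-measurable, whereas the empirical response and design matrix defining $\hat J(\hat\eta^t)$ are built from fold-$t'$ data and are therefore independent of $\mathcal{F}_t$. The event $\mathcal{A}_0$ is also $\mathcal{F}_t$-measurable, since Assumptions~\ref{ass:ass3}--\ref{ass:ass5} only involve the true distributions (which are non-random) and $\hat\eta^t$.

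Next I would apply Theorem~\ref{thm:shift} conditionally on $\mathcal{F}_t$. On the event $\mathcal{A}_0$, the function $\hat\eta^t$ behaves like a fixed deterministic map satisfying Assumptions~\ref{ass:ass3}--\ref{ass:ass5}, and by independence the conditional law of the fold-$t'$ empirical quantities coincides with their unconditional law. Theorem~\ref{thm:shift} then delivers $\hat\pi^t_k\overset{p}{\to}\tilde\pi_k$ in this conditional sense on $\mathcal{A}_0$.

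To convert conditional to unconditional convergence, I would use the standard split
\[
P(|\hat\pi^t_k-\tilde\pi_k|>\epsilon)\leq P(\mathcal{A}_0^c)+E\!\left[\mathbbm{1}_{\mathcal{A}_0}\,P(|\hat\pi^t_k-\tilde\pi_k|>\epsilon\mid\mathcal{F}_t)\right].
\]
The first term tends to zero by hypothesis, and dominated convergence handles the second since the conditional probability is bounded by $1$ and tends to $0$ on $\mathcal{A}_0$. Applying this to $t=1$ and $t=2$ and averaging then gives $\hat\pi_k=(\hat\pi^1_k+\hat\pi^2_k)/2\overset{p}{\to}\tilde\pi_k$.

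The main and essentially only non-routine step is justifying the conditional invocation of Theorem~\ref{thm:shift}: the theorem is stated for a deterministic $\eta$, whereas here $\eta=\hat\eta^t$ is random. Sample splitting is exactly what makes this legitimate, and I would want to verify that the appendix proof of Theorem~\ref{thm:shift} uses $\eta$ only through quantities computed from data independent of $\mathcal{F}_t$, so that the rates survive the conditioning verbatim. Once that is checked, the rest is routine bookkeeping around $P(\mathcal{A}_0)\to 1$.
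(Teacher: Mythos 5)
Your proposal is correct and follows essentially the same route as the paper, which justifies the corollary in one sentence by conditioning on $\hat\eta$ via the independence of the two folds and then invoking Theorem~\ref{thm:shift}. You simply make explicit the measurability, the conditional application on $\mathcal{A}_0$, and the $P(\mathcal{A}_0^c)\to 0$ bookkeeping that the paper leaves implicit.
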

Proof of Theorem \ref{thm:shift} is given in Appendix \ref{app:BCOPSproof}.

\section{Real Data Examples}
\label{sec:realData}
\subsection{MNIST handwritten digit example}
We look at the MNIST handwritten digit data set\citep{lecun-mnisthandwrittendigit-2010}. We let the training data contain digits 0-5, while the test data contain digit 0-9. We subsample 10000 training data and 10000 test data and compare the type I and type II errors using different methods. The type I error is defined as ($1-$coverage) and no type I error is defined for new digits  unobserved  in the training data. Rather than considering ${\rm Err}$ defined in eq.(\ref{eq:Err}), we define the type II error as 
${\rm Err} = E_{F}\mathbbm{1}_{|C(x)\setminus \{y\}| > 0}$ for samples with distribution $F$ ($y$ is the true label of $x$), so that we will have $\Err$ in $[0,1]$.  In this part, we consider three methods:
\vskip -.2in
\begin{enumerate}
\item BCOPS: the  BCOPS with the supervised learning algorithm $\mathcal{L}$ being random forest (rf) or logistic+lasso regression (glm).
\item DLS: the density-level set (with $\mu(x) = 1$ in problem $\mathcal{P}$) with the sample-splitting conformal construction.
\item IRS: the in-sample ratio set (with $\mu(x) = f(x)$ in problem $\mathcal{P}$) with the sample-splitting conformal construction and a supervised $K$-class classifier $\mathcal{L}$  to learn $\frac{f_k(x)}{f(x)}$. Here, we also let $\mathcal{L}$ be either random forest (rf) or multinomial+lasso regression (glm). 
\end{enumerate}

Figure \ref{fig:digit1} plots the nominal type I error for digits showing up in the  training data (average), we can see that all methods can control its claimed type I error. 
\begin{figure}
\caption{\em Type I error control: Actual type I error vs nominal type I error $\alpha$ for different methods. All methods can control the targeted type I error.}
\label{fig:digit1}
\begin{center}
\includegraphics[width = .6\textwidth, height = .5\textwidth]{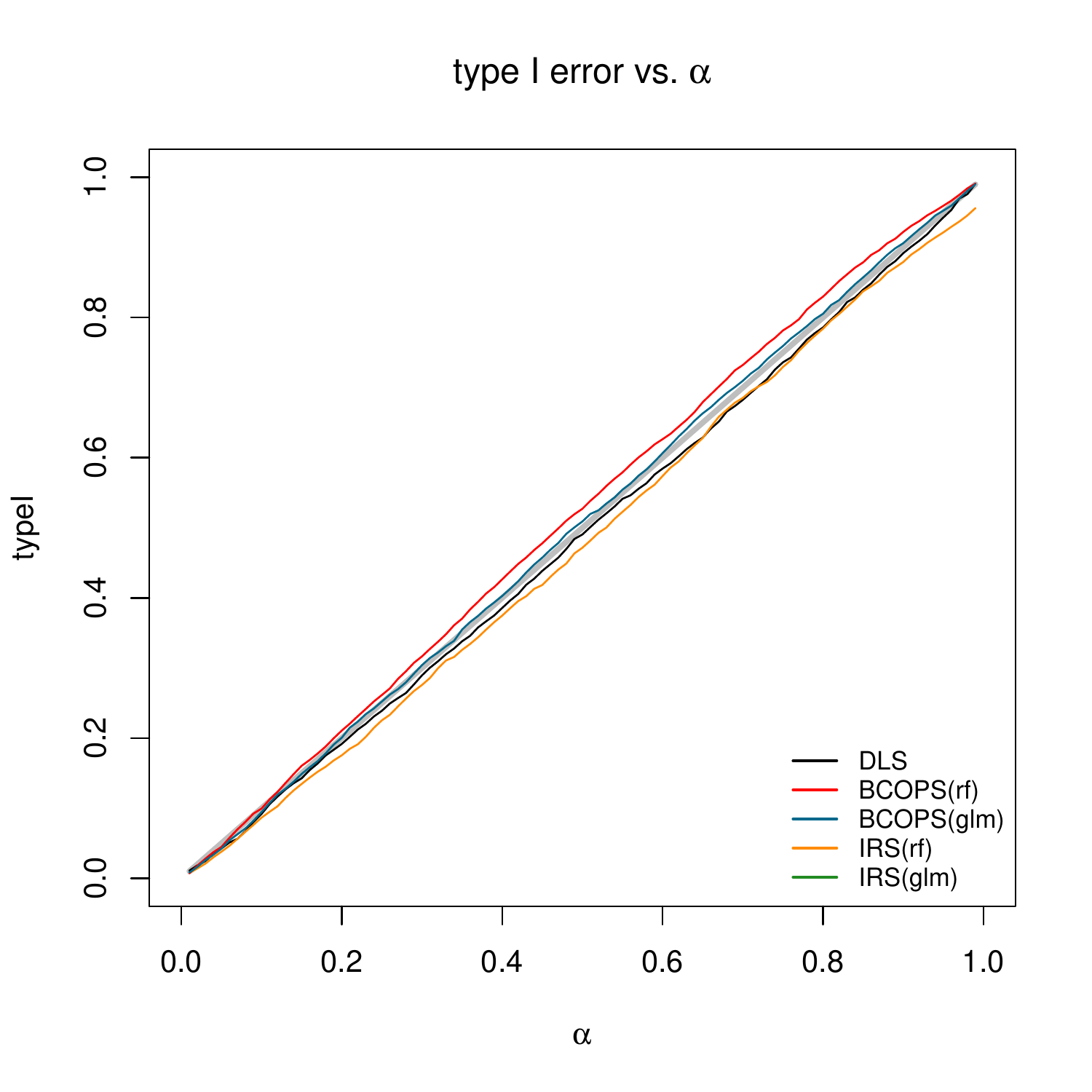}
\end{center}
\end{figure}

\begin{figure}
\caption{\em Comparisons of the Type II $\sim$ Type I error curves using different  conformal prediction sets.  Results for observed digits (digits $\leq 5$) and unobserved digits (digits $\geq 6$) have been presented separately. BCOPS performs the best for the unobserved digits, and IRS is slightly better than BCOPS for the observed digits using the same learning algorithm. Both BCOPS and IRS are much better than DLS in this example. }
\label{fig:digit2}
\begin{center}
\includegraphics[width = 1\textwidth, height = .4\textwidth]{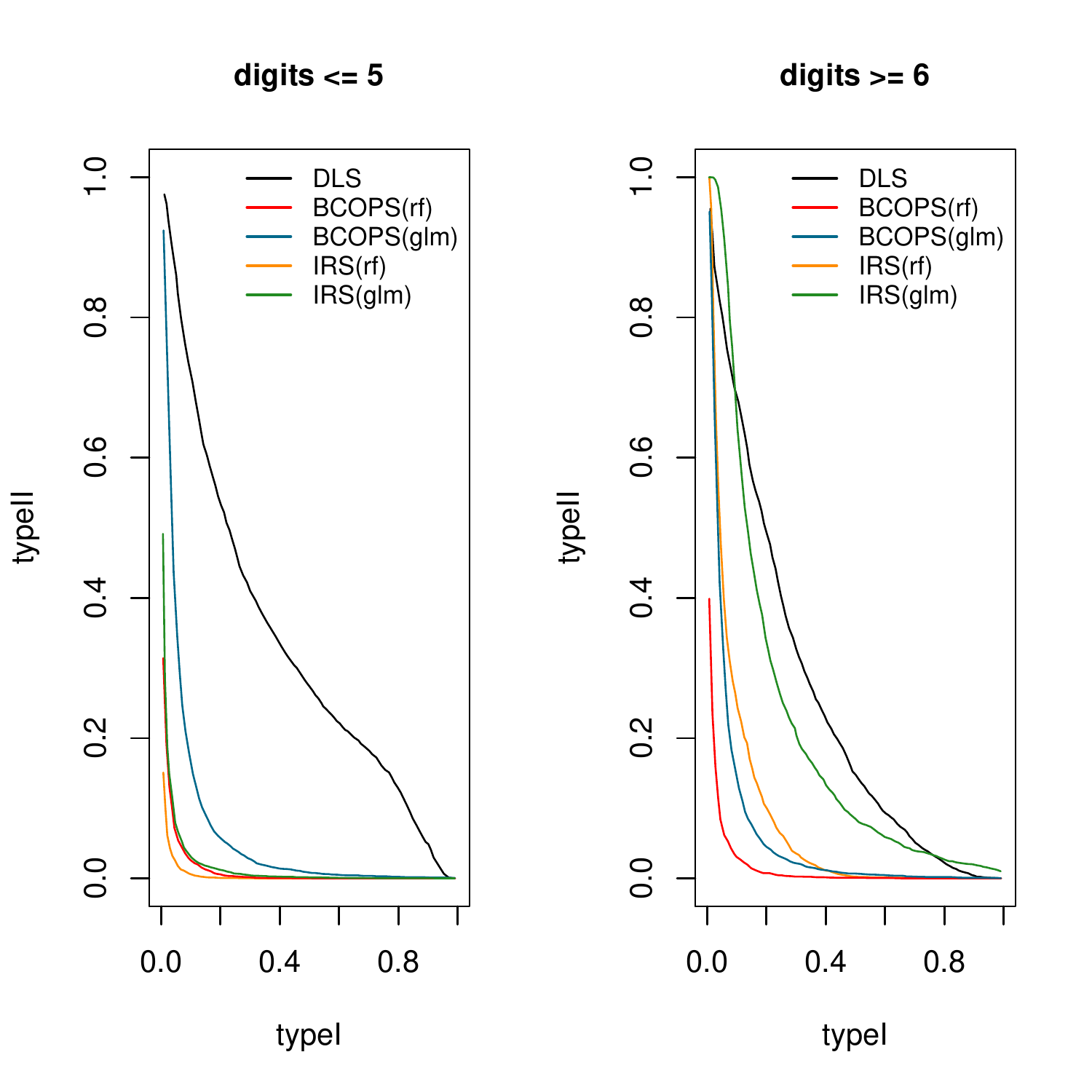}
\end{center}
\end{figure}

Figure \ref{fig:digit2} shows plot of the type II error agains type I error, separately for digits in and not in the training  set,  as $\alpha$ ranges from $0.01$ to $0.99$. We observe that
\begin{itemize}
\item  For the unobserved digits in the training data, we see that

\begin{center}
 DLS$<$IRS(glm)$<$IRS(rf)$<$BCOPS(glm)$<$BCOPS(rf) (ordered from worse to better)
 \end{center}
 
 BCOPS achieves the best performance borrowing information from the unlabeled data. In this example, IRS also has better results compared with DLS for the unobserved digits. IRS depends only on the predictions from the given classifier(s) and does not prevent us from making prediction at a location with sparse observations if the classifiers  themselves do not take this into consideration.  Although we can easily come up with situations where such methods fail entirely in terms of outlier detection, e.g, the simulated example in section \ref{sec:methods_example}, in this example, the dimension learned by IRS for in-sample classification is also informative for outlier detection. 
\item For the observed digits in the training data,  we see that 

\begin{center}
DLS$<$BCOPS(glm) $<$IRS(glm)$<$ BCOPS(rf)$<$IRS(rf)
 \end{center}
 
DLS performs much worse than both BCOPS and IRS, and BCOPS performs  slightly worse than IRS for a given learning algorithm $\mathcal{L}$. 
\end{itemize}
Overall, in this example, BCOPS trades off some in-sample classification accuracy for  higher power in outlier detection.

In practice, we won't have access to the curves in Figure \ref{fig:digit2}. While we can estimate the behavior of the observed digits using the training data, we won't have such luck for the outliers. We can use methods proposed in the section \ref{sec:methods2} to estimate the FLR  and the outlier abstention rate $\gamma$. Figure \ref{fig:digit3} compares the estimated FLR and $\gamma$ with the actual sample-versions of FLR and $\gamma$. We can see that the estimated FLR and $\gamma$ matches reasonably well with the actual FLP and $\gamma$(sample-version) for both learning algorithm $\mathcal{L}$.

\begin{figure}
\caption{\em FLR and outlier abstention rate $\gamma$ estimation. Red curves are actual FLP and estimated FLR, and the blue curves are the actual abstention rate $\gamma$ realized on the current data set and the estimated $\gamma$.}
\label{fig:digit3}
\vskip -.4in
\begin{center}
\includegraphics[width = .49\textwidth, height = .35\textwidth]{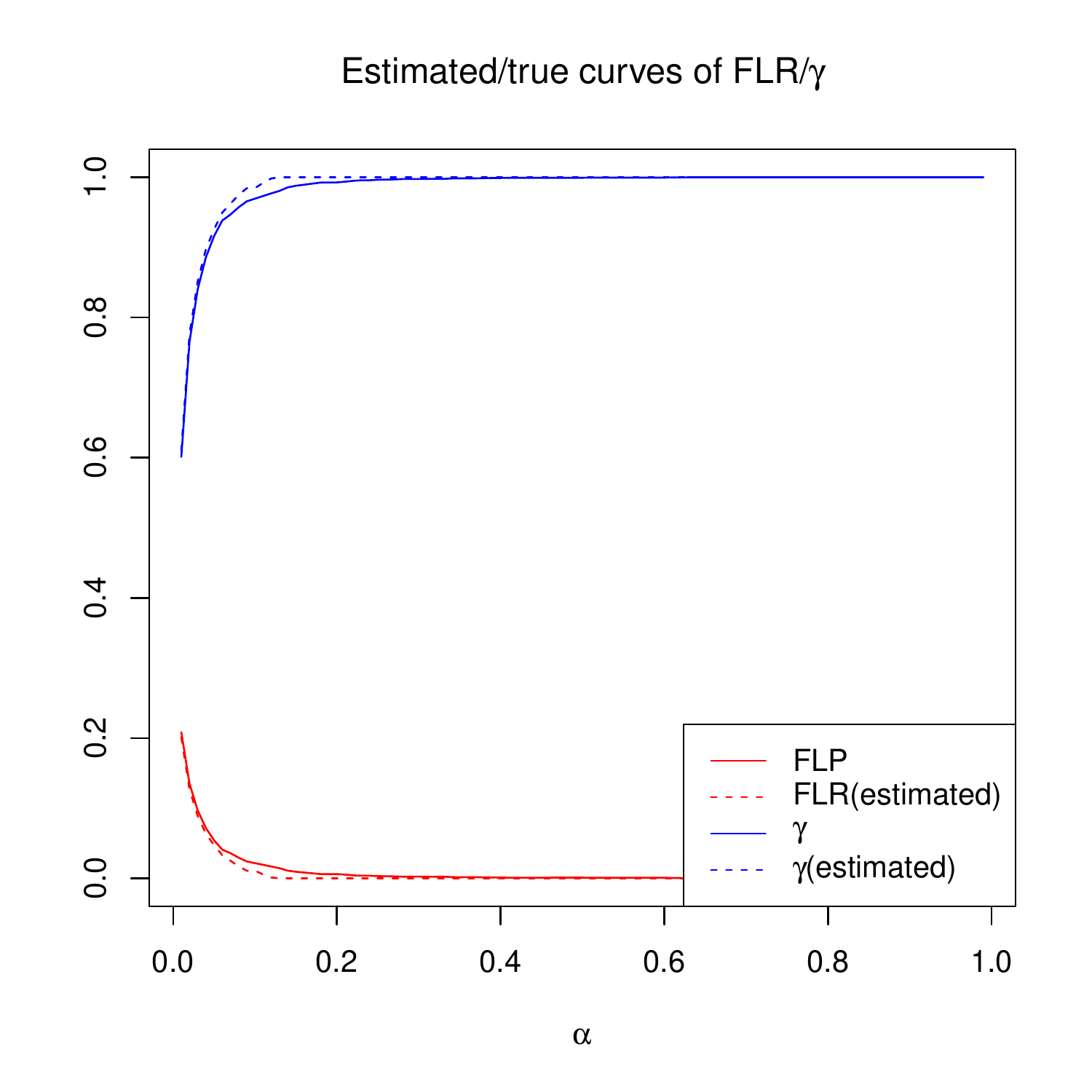}
\includegraphics[width = .49\textwidth, height = .35\textwidth]{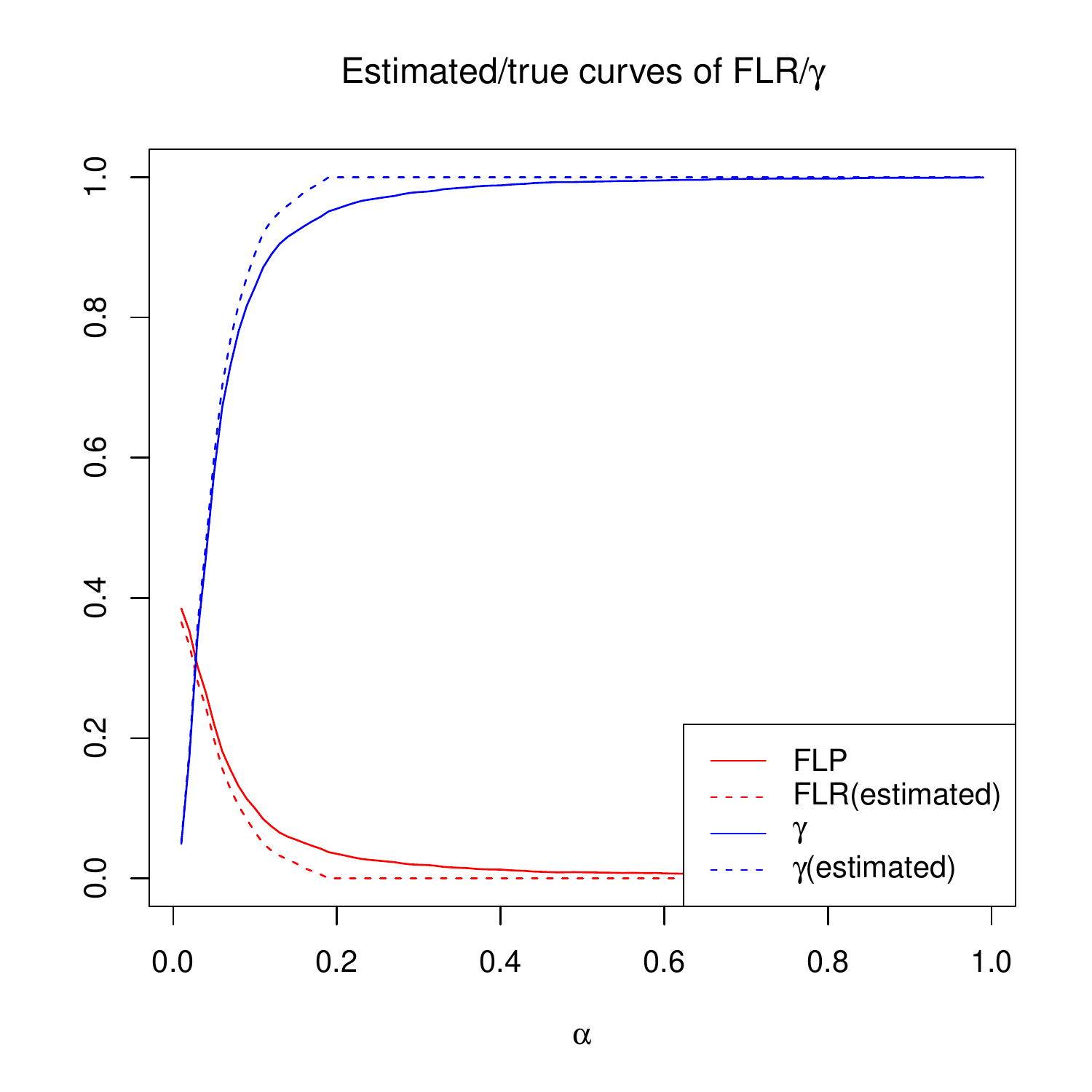}
\end{center}
\end{figure}
\subsection{Network intrusion detection}
The intrusion detector learning task is to build a predictive model capable of distinguishing between ``bad'' connections, called intrusions or attacks, and ``good'' normal connections (\cite{stolfo2000cost}). We use 10\% of the data used in the 1999 KDD intrusion detection contest. We have four classes: normal, smurf,  neptune and other intrusions. The  normal, smurf, neptune samples are randomly assigned into the training and test samples while other intrusions appear only in the test data. We have 116 features in total and approximately 180,000 training samples and  180,000 test samples, and about 3.5\% of the test data are other intrusions. 

We let $\mathcal{L}$ be random forest, and compare the  BCOPS with the RF prediction. Figure \ref{fig:intrusion1} shows the estimated abstention rate and estimated in-sample accuracy defined as (1-estimated type II errors for observed classes). In this example, BCOPS takes $\alpha = 0.05\%$, the largest $\alpha$ achieving 95\% of the abstention rate for outliers.
\begin{figure}[H]
\caption{\em Estimated outlier abstention rate  $\gamma$ and in-sample accuracy against $\alpha$. The vertical red line shows the suggested value for $\alpha$ if we let the estimated abstention rate be $95\%$. In this example, the in-sample accuracy remains almost 1 for extremely small $\alpha$ and is not very instructive for picking $\alpha$. }
\label{fig:intrusion1}
\begin{center}
\includegraphics[width = .49\textwidth, height = .49\textwidth]{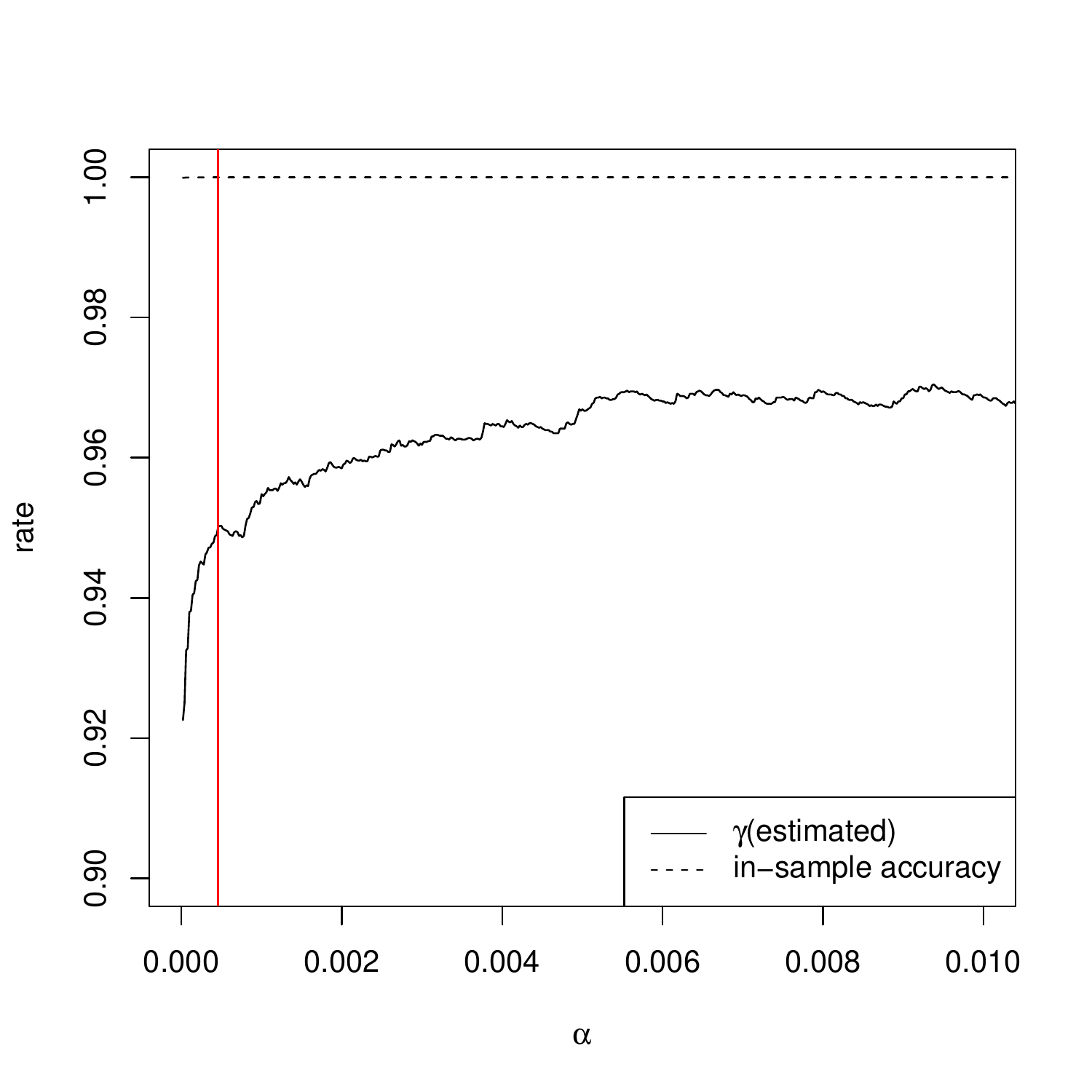}
\end{center}
\end{figure}

Table \ref{tab:intrusion1}  shows prediction accuracy using  BCOPS and RF. We pool smurf and neptune together and call them the observed intrusions. We say that a prediction is correct from RF if it correctly assign normal label to normal data  or assign intrusion label to  intrusions. From  Table \ref{tab:intrusion1},  we can see that the original RF classifier assigns correct label to 99.999\% of samples from the  the observed classes, but claims more 50\% of other intrusions to be the normal. The significant deterioration on unobserved intrusion types is also observed in participants' prediction during the contest (\href{http://cseweb.ucsd.edu/~elkan/clresults.html}{Results of the KDD'99 Classifier Learning Contest}). 

As a comparison, the  BCOPS achieves almost the same predictive power as the vanilla RF for the observed classes while refrains from making predictions for most of the samples from the novel attack types:  the coverage for the normal and intrusion samples using  BCOPS are 99.940\% and 99.942\%, and we assign correct labels to 99.941\% of all samples from the observed classes while refrain from making predictions for 97.3\% of the unobserved intrusion types.

\begin{table}
\centering
\caption{\em Network intrusion results.  The column names are the prediction sets from the BCOPS, and the row names are the true class labels. In each cell, the upper half is the number of samples falling into the category and the lower half is the prediction accuracy from RF for sample in this category. For example, the cell in the column ``normal+intrusion" and row "normal" describes the number of normal connections with BCOPS prediction set contains both normal and at least one intrusion label (upper half) and the prediction accuracy based on RF for these samples (lower half).}
\label{tab:intrusion1} 
\begin{tabular}{|l|l|l|l|l|}
  \hline
\backslashbox{Class}{\thead{BCOPS \\Label}}& normal & intrusion &\thead{normal\\+\\intrusion}& abstention \\ 
  \hline
normal & \backslashbox{1.000}{48635} & \backslashbox{NA}{0 \;\;\;\;\;\;\;\;\;\;\;} & \backslashbox{NA}{0\;\;\;\;\;\;\;\;\;} & \backslashbox{0.931}{29} \\  \hline
  observed intrusions &  \backslashbox{NA\;\;\;\;}{0}& \backslashbox{1.000}{193858} &\backslashbox{NA\;\;\;\;\;}{0\;\;\;\;\;}  & \backslashbox{0.991}{113} \\ \hline
  other intrusions &\backslashbox{ 0.164}{171} & \backslashbox{1.000}{1\;\;\;\;\;\;\;\;\;\;} & \backslashbox{ NA}{0 \;\;\;\;\;\;\;}  & \backslashbox{ 0.471}{8581}  \\ \hline
\end{tabular}
\end{table}

\section{Discussion}
\label{sec:discussion}
In this paper, we propose a conformal inference method with a balanced objective for outlier detection and class label prediction. The conformal inference provides finite sample, distribution-free validity, and with a balanced objective, the proposed method has achieved good performance at both outlier detection and the class label prediction.  Moreover, we propose a method for evaluating the outlier detection rate. Simple as it is, it has achieved good performance in both the simulations and the real data examples in this paper. Here, we also discuss some potential future work:
\begin{enumerate}
\item One extension is to consider the case where  just a single new observation is available. In this case, although we have little information about $f_{test}(x)$, we can still try to design objective functions that lead to good predictions for samples close to the training data set while accepting our ignorance at locations where the training samples are rare. For example, we can use a truncated version of $f(x)$ and let $\mu(x) = f(x)\mathbbm{1}_{f(x)\geq c}+c\mathbbm{1}_{f(x)<c}$ for some properly chosen value $c$. 
\item With just a single new observation available, another interesting question is to ask whether we can use this one sample to learn the direction separating the training data and the new sample. To prevent over-training, especially in high dimension, we can incorporate this direction searching step into the conformal prediction framework. If the direction learning step is too complicated, it will introduce too much variance, if there is no structure learning step, the decision rule can suffer in high dimension even if the problem itself may have simple underlying structure.
\item Another useful extension is to make BCOPS robust to small changes in the conditional distribution $f(x|y)$. The problem is not identifiable without proper constraints.  If we allows only for the transformation $x_{test}\leftarrow ax+b$ from the training data distribution to the test data distribution for a reasonable simple $a$ and $b$ \citep{zhang2013domain}, we may develop a modified BCOPS that is robust to small perturbation to features. 
\end{enumerate}
\medskip {\bf Acknowledgement}. The authors would like to thank Larry Wasserman, who suggested  to us that conformal
prediction might  be a good approach for outlier detection.  Robert Tibshirani was supported by
NIH grant 5R01 EB001988-16 and NSF grant 19 DMS1208164.

\newpage
\appendix
\section{Proofs of Theorems \ref{thm:thm1}-\ref{thm:shift}}
\label{app:BCOPSproof}
Before we prove Theorems \ref{thm:thm1}-\ref{thm:shift}, we first state some Lemmas that will be useful in the proofs.  Let $\hat F_k$ be the empirical distribution of $F_k$ for given samples based on the context. In this paper, when it comes to the empirical estimates, they are are estimated with samples of size $cn$ for a constant $c > 0$ that depends on the context.

\begin{lemma}
\label{lem:lemma0}
Let $G$ and $\hat G$ be the CDF and empirical CDF of a univariate variable in $\real$.  Let $V_{n} = \sup_t |G(t) - \hat G(t)|$, then, for a large enough constant $B$, we have $P(V_{n} \leq B\sqrt{\frac{\log n}{n}})\rightarrow 1$.
\end{lemma}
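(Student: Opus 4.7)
The plan is to recognize this as an immediate consequence of the Dvoretzky--Kiefer--Wolfowitz (DKW) inequality, which gives a distribution-free exponential tail bound for the Kolmogorov--Smirnov statistic. Specifically, the DKW inequality (with Massart's sharp constant) states that for any $\epsilon > 0$,
\[
P\bigl(V_n > \epsilon\bigr) \leq 2\exp(-2n\epsilon^2).
\]
Since the result is for a univariate variable with CDF $G$, this applies directly regardless of whether $G$ is continuous or discrete.

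The key step is to substitute $\epsilon = B\sqrt{\log n / n}$, which yields
\[
P\bigl(V_n > B\sqrt{\log n / n}\bigr) \leq 2\exp(-2B^2 \log n) = 2 n^{-2B^2}.
\]
Choosing any constant $B$ with $B^2 > 1/2$ (for instance $B=1$) makes the right-hand side tend to $0$ as $n\to\infty$, which is equivalent to $P(V_n \leq B\sqrt{\log n / n}) \to 1$.

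There is no real obstacle here: the lemma is a direct, textbook corollary of DKW. The only decision is whether to cite DKW/Massart as a black box or to sketch a proof via VC-dimension or bracketing, but for the purposes of this appendix a one-line invocation of DKW suffices. If one prefers not to invoke Massart's constant, the weaker form $P(V_n > \epsilon)\leq C\exp(-2n\epsilon^2)$ with an absolute constant $C$ (which is classical and follows from simpler symmetrization arguments) is enough to obtain the same $\sqrt{\log n/n}$ rate, at the cost of enlarging $B$ slightly.
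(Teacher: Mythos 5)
Your proof is correct and takes essentially the same route as the paper, which simply cites classical empirical process theory (the paper points to Wellner's book and to Lemma C.1 of Lei et al.); the DKW inequality with the substitution $\epsilon = B\sqrt{\log n/n}$ is exactly the standard way to make that citation concrete. If anything, your argument shows the lemma is stated more weakly than necessary, since DKW already gives the rate $O(1/\sqrt{n})$ without the $\sqrt{\log n}$ factor and any $B>0$ suffices.
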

\begin{proof}
Lemma \ref{lem:lemma0} is a result of  the classical empirical process theory (\cite{wellner2013weak}), see also Lemma C.1 in  \cite{lei2013distribution}.
\end{proof}
\begin{lemma}
\label{lem:lemma1}
Let $g(t)$ be the density of the univariate variable $t\in \real$. Let $\hat g(t)$ be its Gaussian kernel density estimation with bandwidth $h_n>\frac{\log n}{n}$.  Suppose $g(t)$ is bounded and H$\ddot{o}$lder continuous with exponent $1\geq \alpha > 0$, then, there exists a large enough constant $B$, such that with probability at least $1-\frac{1}{n}$, we have
\[
\|g(t) - \hat g(t)\|_{\infty} < B(h^{\frac{\alpha}{2}}_n+\sqrt{\frac{\log n}{n h_n}})
\]
\end{lemma}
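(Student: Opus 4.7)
The plan is the classical bias--variance decomposition for kernel density estimators. Write
\[
\hat g(t) - g(t) = \bigl(\hat g(t) - \E \hat g(t)\bigr) + \bigl(\E \hat g(t) - g(t)\bigr),
\]
and bound the deterministic bias and the stochastic fluctuation uniformly in $t$.

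For the bias, let $K$ denote the standard Gaussian density and $K_{h_n}(u) = h_n^{-1} K(u/h_n)$, so that $\E \hat g(t) = \int K_{h_n}(t-s)\,g(s)\,ds$. A change of variables gives $\E \hat g(t) - g(t) = \int K(u)\bigl(g(t - h_n u) - g(t)\bigr)\,du$. Using the Hölder hypothesis $|g(t - h_n u) - g(t)| \leq L\, (h_n |u|)^{\alpha}$ and the finiteness of $\int |u|^{\alpha} K(u)\,du$, I would conclude
\[
\sup_t \bigl|\E \hat g(t) - g(t)\bigr| \;\leq\; C\, h_n^{\alpha} \;\leq\; C\, h_n^{\alpha/2}
\]
for $h_n \leq 1$ (which holds eventually). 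The second factor in the stated bound therefore absorbs the bias.

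For the stochastic term $\sup_t |\hat g(t) - \E \hat g(t)|$, my plan is a truncate--grid--Bernstein argument. First, restrict to a compact interval $[-T_n, T_n]$ with $T_n = c\sqrt{\log n}$: outside this range, Gaussian tails make both $\hat g(t)$ and $\E \hat g(t)$ of order $O(n^{-1})$, well inside the target rate. On $[-T_n, T_n]$, I would place a deterministic grid of spacing $\delta_n$ polynomially small in $n$ (for instance $\delta_n = h_n^{3}/n$). At each grid point, $\hat g(t) - \E \hat g(t)$ is an average of i.i.d., mean-zero, almost surely $O(h_n^{-1})$-bounded summands with variance $O((n h_n)^{-1})$, so Bernstein's inequality delivers a pointwise deviation of order $\sqrt{\log n/(n h_n)}$ with probability at least $1 - n^{-3}$. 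A union bound over the $\mathrm{poly}(n)$ grid points leaves probability at least $1 - 1/n$. To extend from the grid to all $t \in [-T_n, T_n]$, I would use that $K_{h_n}$ is Lipschitz with constant $O(h_n^{-2})$, so with $\delta_n$ chosen as above the interpolation error is $o\!\bigl(\sqrt{\log n/(n h_n)}\bigr)$.

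The main obstacle is making the uniform-in-$t$ variance bound work at probability $1 - 1/n$ without losing the $\sqrt{\log n/(n h_n)}$ rate; this is where the hypothesis $h_n > \log n / n$ is used, since it ensures the Bernstein variance term dominates the bounded-increment term and keeps the inequality informative. Combining the bias bound and the stochastic bound by the triangle inequality then produces
\[
\|\hat g - g\|_\infty \;\leq\; B\!\left(h_n^{\alpha/2} + \sqrt{\tfrac{\log n}{n h_n}}\right)
\]
on an event of probability at least $1 - 1/n$, as claimed.
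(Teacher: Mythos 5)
Your route is genuinely different from the paper's: the paper does not prove this lemma from scratch at all, but simply verifies that the Gaussian kernel and the boundedness of $g$ satisfy the hypotheses of Theorem 2 of \cite{jiang2017uniform} and invokes that result (which itself rests on a VC-class/Talagrand-type uniform deviation bound over all of $\real^d$). Your self-contained bias--variance decomposition is the classical alternative, and most of it is sound: the bias bound $\sup_t|\E\hat g(t)-g(t)|\leq C h_n^{\alpha}\leq C h_n^{\alpha/2}$ is correct and correctly absorbed, the Bernstein computation at a fixed $t$ gives the right rate, and you correctly identify that $h_n>\frac{\log n}{n}$ is what keeps the variance term dominant in Bernstein's inequality. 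What your approach buys is elementarity and transparency about where the hypotheses enter; what the paper's citation buys is uniformity over the whole real line without any truncation, which is exactly where your argument has its one real problem.

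The gap is the truncation step. The hypotheses only say that $g$ is a bounded, H\"older continuous \emph{density on $\real$}; there is no tail condition, so $g$ may be heavy-tailed (e.g.\ Cauchy). Consequently the claim that outside $[-c\sqrt{\log n},\,c\sqrt{\log n}]$ ``Gaussian tails make both $\hat g(t)$ and $\E\hat g(t)$ of order $O(n^{-1})$'' is false in general: the Gaussian tails you are using belong to the \emph{kernel}, not to the data distribution, and for $|t|=c\sqrt{\log n}$ one can have $\E\hat g(t)\approx g(t)$ of order $1/\log n$, with data points present arbitrarily far out so that $\hat g$ is not negligible there either. Your grid-plus-union-bound argument therefore only controls $\sup_{|t|\leq T_n}|\hat g(t)-\E\hat g(t)|$, not the full supremum required by the statement $\|g-\hat g\|_\infty$. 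This is fixable, but it needs an actual argument: for instance, choose $T_n$ so that the tail mass $\int_{|s|>T_n-1}g(s)\,ds\leq n^{-c}$ for $c$ large (such $T_n$ exists since $g$ is a density, though it need not be $O(\sqrt{\log n})$), use this plus the H\"older condition to bound $\sup_{|s|>T_n-1}g(s)$, and separately bound the contribution of the (with high probability nonexistent, or at worst few) sample points falling outside $[-T_n+1,T_n-1]$; alternatively, replace the grid by the VC-class uniform deviation inequality over all of $\real$, which is precisely what the cited reference does. As written, the proof does not cover $t$ outside a compact set, and under the stated assumptions that region cannot be dismissed as negligible.
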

\begin{proof}
Obviously, the Gaussian kernel  $K(z)$ for $z\in \real$ satisfies Assumption 2-3 in \cite{jiang2017uniform} (the spherically symmetric and non-increasing Assumption, and the exponential decay Assumption), and $g(t)$ is bounded (Assumption 1 in  \cite{jiang2017uniform}). Then, Lemma \ref{lem:lemma1} a special case of Theorem 2 in \cite{jiang2017uniform}.
\end{proof}

\subsection{Proof of Theorem \ref{thm:thm1}}
The proof follows the same procedure as in \cite{lei2013distribution}. Let  $A_k$ be the accepted region for class $k$ under the oracle  BCOPS.  Let $R_{n,0} = P_{out}(x\notin A_n)$, $R_{n,1} =\sup_{x\in A_n} |\hat v_{k}(x) - v_k(x)|$ and $R_{n,2} =|Q(\alpha; v_k, F_k) -Q(\alpha; v_k, \hat F_k)|$. Then we have
\begin{align}
\label{eq:proof1}
(\widehat A_k\setminus A_{k})\cap A_n = &\{x| x\in A_n, v_k(x) < Q(\alpha; v_k, F_k), \hat v_k(x) \geq Q(\alpha; \hat v_k, \hat F_k) \}\nonumber\\
\subseteq& \{x|Q(\alpha; v_k, F_k) - 2R_{n,1}-R_{n,2}\leq v_k(x) <Q(\alpha; v_k, F_k)\}
\end{align}
and
\begin{align*}
\int_{\widehat A_k\setminus A_{k}}f_{test}(x)dx &\leq \int_{(\widehat A_k\setminus A_{k})\cap A_n}f_{test}(x)dx+R_{n,0}\\
&\leq \int_{(\widehat A_k\setminus A_{k})\cap A_n }\frac{v_k(x)}{[Q(\alpha; v_k, F_k) - 2R_{n,1}-R_{n,2}]_+}f_{test(x)}d(x)+R_{n,0} \\
&\leq \frac{P_{F_k}((\widehat A_k\setminus A_{k})\cap A_n )}{[Q(\alpha; v_k, F_k)  - 2R_{n,1}-R_{n,2}]_+}+R_{n,0}
\end{align*}
By Assumption \ref{ass:ass2}, for a large enough constant $B_1$, we have 
\begin{equation}
\label{eq:proof2}
P(R_{n,1} \leq B_1(\frac{\log n}{n})^{\frac{\beta_1}{2}})\rightarrow 0,\; P(R_{n,0} \leq B_1(\frac{\log n}{n})^{\frac{\beta_2}{2}})\rightarrow 0
\end{equation}
Let $G$ and $\hat G$ be the CDF and empirical CDF of $v_k(x)$.  By Lemma \ref{lem:lemma0}, on the one hand, with probability approaching 1, for any constant $\delta$ and a constant $B_2$ large enough, we have
\[
|\hat G(Q(\alpha-\delta; v_k, F_k))-(\alpha-\delta)| \leq B_2\sqrt{\frac{\log n}{n}}
\]
On the other hand, by Assumption \ref{ass:ass1}, we have
\[
\delta \geq c_1 |Q(\alpha; v_k, F_k) -Q(\alpha\pm\delta; v_k, F_k) |^\gamma
\]
In other words, with probability approaching 1, the following is true
\[
Q(\alpha-B_2\sqrt{\frac{\log n}{n}}; v_k,  F_k) \leq Q(\alpha; v_k, \hat F_k) \leq Q(\alpha+B_2\sqrt{\frac{\log n}{n}}; v_k, F_k)
\]
and
\begin{equation}
\label{eq:proof3}
|Q(\alpha\pm B_2\sqrt{\frac{\log n}{n}}; v_k, F_k) - Q(\alpha; v_k, F_k)| \leq (\frac{B_2}{c_1}\sqrt{\frac{\log n}{n}})^{\frac{1}{\gamma}} 
\end{equation}
Hence,  $R_{n,2}\leq (\frac{B_2}{c_1}\sqrt{\frac{\log n}{n}})^{\frac{1}{\gamma}}$.

For the numerator of equation (\ref{eq:proof1}), we use equations (\ref{eq:proof2})-(\ref{eq:proof3}) and apply Assumption \ref{ass:ass1} again, for a large enough constant $B_3$, we have,
\begin{align*}
&P_{k}((\widehat A_k\setminus A_{k})\cap A_n) \\
\leq& P_{k}(v_{k,\alpha} - 2R_{n,1}-R_{n,2}\leq v(x) \leq v_{k,\alpha})\\
 \leq &(c_2(2R_{n,1}+R_{n,2}))^\gamma \leq B_3(\sqrt{\frac{\log n}{n}}+(\frac{\log n}{n})^{\frac{\beta_1\gamma}{2}})
\end{align*}
For the denominator of equation (\ref{eq:proof1}), when $\alpha$ is a positive constant, since $v_k(x) = \frac{f_k(x)}{f_{test}(x)+f_k(x)}$ is non-zero as long as $f_k(x)$ is non-zero, we must have that $Q(\alpha; v_k, F_k)$ is also a positive constant, and we can always take $n$ large enough, such that
\begin{equation}
\label{eq:denominator}
Q(\alpha; v_k, F_k)  - 2R_{n,1}-R_{n,2} \geq \frac{1}{2}Q(\alpha; v_k, F_k) > 0
\end{equation}
and that for a constant $B$ large enough, with probability approaching 1:
\begin{equation}
\label{eq:numerator}
\int_{\widehat A_k\setminus A_{k}}f_{test}(x)dx \leq \frac{B}{K} ((\frac{1}{n})^{\frac{\min(\gamma \beta_1,\beta_2,1)}{2}})
\end{equation}
Combining eq.(\ref{eq:denominator})-(\ref{eq:numerator}), with probability approaching 1, we have
\[
\int (|\widehat C(x)| - |C(x)|)f_{test}(x) d x \leq \sum_k \int_{\widehat A_k\setminus A_{k}}f_{test}(x)dx\leq B ((\frac{1}{n})^{\frac{\min(\gamma \beta_1,\beta_2,1)}{2}})
\]
with probability approaching 1 for a large enough constant $B$.

\subsection{Proof of Theorem \ref{thm:shift}}
Since $\bSigma$ is invertible with smallest eigenvalue $\sigma_{\min} \geq c >0$ for a constant $c$ and $(\bP^T\bP)^{-1}\bSigma^{-1}\bP^T\rP = \tilde \pi$ by Assumption \ref{ass:ass5}. To show that $\hat \pi = (\hat \bP^T\hat \bP)^{-1}\hat \bP^T\hat \rP$, where $\hat \bP$ and $\hat \rP$ are the empirical versions of $\bP$ and $\rP$,  it is sufficient to show
\[
\hat \bP_{l,k}\overset{p}{\rightarrow} \bP_{l,k},\;\; \forall k = 1,\ldots, K, \bold{R},\;l = 1,\ldots,K
\]
In section \ref{sec:theory_shift}, we have only defined $\bP_{l,k}$ for $k =1,\ldots, K$, here we include the class $\bold{R}$ as well following the same definition: $\bP_{l, \bold{R}} = P_{\bold{R}}(\eta_l(x)\in S_l)$. Recall the definition of $\hat \bP_{l,k}$:
\[
\hat \bP_{l,k} \coloneqq P_{\hat F_k}(\eta_l(x)\in \hat S_l) = ( \hat \bP_{l,k}-\tilde \bP_{l,k})+\tilde \bP_{l,k}
\]
where 
\begin{itemize}
\item $\tilde \bP_{l,k} \coloneqq P_{F_k}(\eta_l(x)\in \hat S_l)$.
\item $\hat S_l \coloneqq \{t: \hat g_{l,l}(t) \geq \hat g_{l,\zeta}\}$, $\hat g_{l,l}(t)$ is the kernel estimation of $g_{l,l}(t)$ and $\hat g_{l,\zeta}\coloneqq Q(\zeta; \hat g_{l,l}\circ \eta_l, \hat F_l)$.
\end{itemize}
We first show that 
\[
\tilde \bP_{l,k}\overset{p}{\rightarrow} \bP_{l,k},\;\; \forall k = 1,\ldots, K, \bold{R},\;l = 1,\ldots,K
\]
Let $\Delta = \bP_{l,k}-\tilde \bP_{l,k} = \Delta_1-\Delta_2$ where $\Delta_1 = \int_t g_{l,k}(t)\mathbbm{1}_{\hat g_{l,\zeta}\leq \hat g_{l,l}(t) \leq g_{l,\zeta}}$ and $\Delta_2 = \int_t g_{l,k}(t)\mathbbm{1}_{g_{l,\zeta}\leq \hat g_{l,l}(t) \leq \hat g_{l,\zeta}}$. We now prove $\Delta_1\overset{p}{\rightarrow} 0$. Let $R_{n,1} = \|\hat g_{l,l}(t) - g_{l,l}(t)\|_\infty$ and $R_{n,2} = |\hat g_{l,\zeta} - g_{l,\zeta}|$.  Then, for a large enough constant $B$, we have
\begin{align}
\label{eq:shift0}
\Delta_1 &=\int_t g_{l,k}(t)\mathbbm{1}_{\hat g_{l,\zeta}+\hat g_{l,l}(t) - g_{l,l}(t)\leq g_{l,l}(t) \leq g_{l,\zeta}+\hat g_{l,l}(t) - g_{l,l}(t)}\nonumber\\
&\leq \int_t g_{l,k}(t)\mathbbm{1}_{g_{l,\zeta}+\hat g_{l,l}(t) - g_{l,l}(t)-R_{n,2}\leq g_{l,l}(t) \leq g_{l,\zeta}+\hat g_{l,l}(t) - g_{l,l}(t)}\nonumber\\
& \leq \left(\frac{\max_t g_{l,k}(t)}{[g_{l,\zeta}-R_{n,1}-R_{n,2}]_+}\right) P_l(g_{l,\zeta}+\hat g_{l,l}(t) - g_{l,l}(t)-R_{n,2} \leq g_{l,l}(t) \leq g_{l,\zeta}+\hat g_{l,l}(t) - g_{l,l}(t))\nonumber\\
& \leq  B\left(\frac{\max_t g_{l,k}(t)}{[g_{l,\zeta}-R_{n,1}-R_{n,2}]_+}\right) R^{\gamma}_{n,2}
\end{align}
The last step is a result of Assumption \ref{ass:ass4}. Observe that
\begin{itemize}
\item  Under Assumption \ref{ass:ass3}, let the constant $\alpha> 0$  be the H$\ddot{o}$lder exponent for $g_{l,l}(t)$, we apply Lemma \ref{lem:lemma1} and have that for a large enough constant $B$:
\begin{equation}
\label{eq:shift1}
P(R_{n,1}\geq B(\sqrt{\frac{\log n}{n h_n}}+h_n^{\frac{\alpha}{2}}))\rightarrow 1\Rightarrow R_{n,1}\overset{p}{\rightarrow} 0
\end{equation}
\item Notice that we also have that $\forall \delta \in (-\delta_0, \delta_0)$:
\begin{equation}
\label{eq:shift2}
|Q(\zeta-\delta;  g_{l,l}, \hat F_l)) -\hat g_{l,\zeta - \delta} | \leq R_{n,1}\overset{p}{\rightarrow} 0
\end{equation}
Thus, we have 
\[
R_{n,2} = |\hat g_{l,\zeta}-Q(\zeta;  g_{l,l}, \hat F_l)+Q(\zeta;  g_{l,l}, \hat F_l)-g_{l,\zeta}| \leq R_{n,1}+|Q(\zeta;  g_{l,l}, \hat F_l))-g_{l,\zeta}|
\]
Let $G$ and $\hat G$ be the CDF and empirical CDF of  $g_{l,l}(\eta_l(x))$ in class $l$. Apply Lemma \ref{lem:lemma0},  we know that  there exists a constant $B$, such that $\forall \delta \in (-\delta_0, \delta_0)$:
\begin{align*}
&|G(Q(\zeta-\delta;  g_{l,l}, \hat F_l)) -(\zeta - \delta)| \leq  \|G-\hat G\|_\infty \leq B\sqrt{\frac{\log n}{n}} \\
\end{align*}
Under Assumption \ref{ass:ass4}, we have
\[
\delta \geq c_1 |Q(\zeta-\delta;  g_{l,l},F_l) - g_{l,\zeta} |^\gamma
\]
In other words, with probability approaching 1, for a large enough constant $B$, the following is true
\[
Q(\zeta-B\sqrt{\frac{\log n}{n}};g_{l,l}, F_l) \leq Q(\zeta;g_{l,l},  \hat F_l)  \leq Q(\zeta+B\sqrt{\frac{\log n}{n}};g_{l,l}, F_l)  \Rightarrow |g_{l,\zeta} - Q(\zeta; g_{l,l}, \hat F_l)| \leq B(\frac{\log n}{n})^{\frac{1}{2\gamma}}
\]
Hence,  $R_{n,2}\leq R_{n,1}+B(\frac{\log n}{n})^{\frac{1}{2\gamma}}$ for a large enough constant $B$.
\end{itemize}
Hence, as $n\rightarrow \infty$, we have $g_{l,\zeta}-R_{n,1}-R_{n,2}\rightarrow g_{l,\zeta}$ is a positive constant. Combine the above analysis with eq.(\ref{eq:shift0}) and that the density $g_{l,k}$ is bounded, for a large enough constant $B$, we have
\[
\Delta_1 \leq B\left(R_{n,1}+(\frac{\log n}{n})^{\frac{1}{2\gamma}}\right)^{\gamma}\rightarrow 0
\]
By symmetry, $\Delta_2\overset{p}{\rightarrow} 0$ will follow the same argument argument. Hence, we have $\Delta\overset{p}{\rightarrow} 0$. Next, we show that
\[
| \hat \bP_{l,k}-\tilde \bP_{l,k} |\overset{p}{\rightarrow} 0
\]
Let $G$ and $\hat G$ be the CDF and empirical CDF of  $ g_{l,l}(\eta_l(x))$ in class $k$, this is true with the argument below:
\begin{align*}
| \hat \bP_{l,k}-\tilde \bP_{l,k} |& =  |P_{\hat F_k}(\hat g_{l,l}(\eta_l(x))\leq \hat g_{l,\zeta}) - P_{ F_k}(\hat g_{l,l}(\eta_l(x)) \leq \hat g_{l,\zeta})|\\
& \leq \max\left(|\hat G(\hat g_{l,\zeta}+R_{n,1}) - G(\hat g_{l,\zeta}-R_{n,1})|, |\hat G(\hat g_{l,\zeta}-R_{n,1}) - G(\hat g_{l,\zeta}+R_{n,1})|\right)\\
&\leq \|\hat G - G\|_\infty+|G(\hat g_{l,\zeta}+R_{n,1}) -G(\hat g_{l,\zeta}-R_{n,1}) |
\end{align*}
By Lemma \ref{lem:lemma0}, we have that $ \|\hat G - G\|_\infty < B \sqrt{\frac{\log n}{n}}$ for a large enough constant $B$ with probability approaching 1. Following the same argument as eq.(\ref{eq:shift0}), we know that for a large enough constant $B$:
\[
|G(\hat g_{l,\zeta}+R_{n,1}) -G(\hat g_{l,\zeta}-R_{n,1}) | \leq B\frac{\max g_{l,k}(t)}{[g_{l,\zeta}-R_{n,2}-R_{n,1}]_+}(2R_{n,1})^{\gamma}\overset{p}{\rightarrow} 0
\]
Hence, we also have $| \hat \bP_{l,k}-\tilde \bP_{l,k} |\overset{p}{\rightarrow} 0$ and we have thus proved our statement.
\bibliographystyle{agsm}
\bibliography{outliers}
\end{document}